\documentclass[10pt,conference]{IEEEtran}
\IEEEoverridecommandlockouts

\usepackage{caption}
\usepackage{subcaption}
\usepackage{tcolorbox}
\usepackage{multirow}
\usepackage{xcolor}
\usepackage{colortbl}
\usepackage{comment}
\usepackage{booktabs}
\usepackage[ruled,linesnumbered, vlined]{algorithm2e}
\usepackage{setspace}
\newcommand{\mytcc}[1]{\tcc*[r]{\textcolor{brown}{#1}}}

\usepackage{enumitem}
\usepackage{multicol}
\usepackage{parcolumns}
\usepackage{threeparttable}
\usepackage{makecell}
\usepackage{textcomp}
\usepackage{stfloats}
\usepackage{pifont}

\usepackage{amsmath,amssymb,amsfonts}
\usepackage{amsthm}

\newtheorem{theorem}{Theorem}

\newtheorem{definition}{Definition}
\newtheorem{example}{Example}

\usepackage{graphicx}
\usepackage{textcomp}
\usepackage{mathrsfs}
\usepackage{bm}
\usepackage[
  colorlinks=true,
  linkcolor=blue,
  citecolor=blue,
  urlcolor=blue
]{hyperref}
\usepackage{xspace}
\usepackage{url}

\usepackage{xcolor}
\newcommand{\eqdef}{\buildrel \mbox{\tiny\textrm{def}} \over =}
\newcommand{\mrm}{\mathrm}

\newcommand{\name}{\textsc{ProF}\xspace}

\def\BibTeX{{\rm B\kern-.05em{\sc i\kern-.025em b}\kern-.08em
    T\kern-.1667em\lower.7ex\hbox{E}\kern-.125emX}}
\begin{document}

\title{Provable Fairness Repair for Deep Neural Networks}

\IEEEoverridecommandlockouts

\author{
\IEEEauthorblockN{Jianan Ma}
\IEEEauthorblockA{\textit{Hangzhou Dianzi University, China} \\
\textit{Zhejiang University, China} \\
majianannn@gmail.com}
\and
\IEEEauthorblockN{Jingyi Wang\IEEEauthorrefmark{4}}
\IEEEauthorblockA{\textit{Zhejiang University, China} \\
wangjyee@zju.edu.cn}
\and[\hfill\mbox{}\vspace{1em}\\
\mbox{}\hfill]
\IEEEauthorblockN{Qi Xuan}
\IEEEauthorblockA{\textit{Zhejiang University of Technology, China} \\
xuanqi@zjut.edu.cn}
\and[\hfill]
\IEEEauthorblockN{Zhen Wang}
\IEEEauthorblockA{\textit{Hangzhou Dianzi University, China} \\
wangzhen@hdu.edu.cn}
\thanks{\IEEEauthorrefmark{4}Corresponding author: Jingyi Wang.}
}

\maketitle

\begin{abstract} 
Deep neural networks (DNNs) are suffering from ethical issues such as individual discrimination.
In response, extensive NN repair techniques have been developed to adjust models and mitigate such undesired behaviors.
However, existing fairness repair methods are typically data-centric,
which often lack provable guarantees and generalization to unseen samples.
To overcome these limitations, we propose \name, a novel fairness repair framework with provable guarantees. The key intuition of \name is to leverage interval bound propagation (a widely used NN verification technique) to soundly capture model outputs over the whole set $\mathcal{S}(\bm{x})$ around a biased sample $\bm{x}$.
The derived bounds are utilized to guide fairness repair which encourages the model to produce consistent outputs on $\mathcal{S}(\bm{x})$.
Specifically, we integrate fairness constraints and model modifications into a unified constraint-solving formulation, which can be transformed to a Mixed-Integer Linear Programming (MILP) problem solvable by off-the-shelf solvers. 
The solution to the MILP problem effectively induces a repaired model with guaranteed fairness over the whole set $\mathcal{S}(\bm{x})$.
We evaluate \name on four widely used benchmark datasets and demonstrate that it achieves provable fairness repair, with generalization of up to 95.93\% on full datasets and 93.16\% on the entire input space. 
Notably, \name can be easily configured to support multiple sensitive attributes and more practical fairness definitions, while providing provable repair guarantees and delivering around 90\% fairness improvement.
Our code is available at \url{https://github.com/nninjn/ProF}.

\end{abstract}

\begin{IEEEkeywords}
neural network repair, fairness, interval bound propagation
\end{IEEEkeywords}

\section{Introduction}
Deep neural networks (DNNs) have demonstrated impressive performance in a wide range of applications such as computer vision~\cite{badar2020application}, natural language processing~\cite{devlin2018bert}, and autonomous driving~\cite{deng2021deep}. 
Despite the remarkable success, their increased adoption in sensitive domains (e.g., crime risk assessment~\cite{brennan2009evaluating}, public policy~\cite{rodolfa2021empirical}, and credit scoring~\cite{khandani2010consumer}) has raised concerns that DNNs learning from biased data can produce unfair results.
Individual discrimination has become one of the most critical problems, where input pairs differing only in sensitive attributes (e.g., gender, age, etc.) receive different predictions~\cite{aggarwal2019black}.
In response to this threat, numerous studies on DNN fairness testing~\cite{zhang2020white, quan2025dissecting, wang2024maft, zhang2021efficient, zheng2022neuronfair} and verification~\cite{biswas2023fairify, kim2024fairquant, sun2021probabilistic} have been proposed.
These techniques aim to either uncover unfairness by generating discriminatory instances or to provide guarantees through formal methods.
However, their analysis results cannot provide further solutions as they do not directly mitigate the unfairness.
This raises a crucial question: \emph{how to repair the model (ultimately with provable guarantee) once biased behavior is identified?}

A typical method for DNN fairness repair is to retrain the model with a set of discriminatory inputs. 
While retraining is easy to deploy, it inherently faces challenges in real-world applications due to its high costs and the necessity of accessing the original training set, which may be impractical when the model is obtained from a third party or the training data are private.
For more effective and efficient repair, researchers have proposed various methods that aim to adjust the parameters of a biased NN to eliminate discriminatory behaviors.
Among them, a common paradigm inspired by traditional software programs debugging is to first identify the key units in the model that are responsible for discrimination.
For example, CARE~\cite{sun2022causality} constructs the causal model between neurons and model outputs to pinpoint problematic neurons, followed by using a heuristic algorithm to generate neuron-level patches to modify their parameters.
In addition, some other approaches achieve repair through more efficient ways: 
RUNNER~\cite{li2024runner} design a loss function to iteratively optimize the identified biased neurons via gradient descent, while IDNN~\cite{chen2024isolation} directly isolate them.
NeuFair~\cite{dasu2024neufair}, on the other hand, leverages the simulated annealing algorithm to provide statistical guarantees for group fairness repair. 
More recently, GRFT~\cite{quan2025dissecting} has been proposed for more effective fairness testing and repair. 
It achieves state-of-the-art performance by directly minimizing the distance of model outputs between discriminatory inputs and their similar instances.

\textbf{Research gap} - 
While prior repair methods have shown effectiveness in certain cases, they suffer from two fundamental limitations.
First, the repair techniques they employ (e.g., heuristic algorithms, gradient-based parameter tuning, or neuron isolation) are empirical in nature and lack deterministic guarantees. 
Even on the discriminatory pair $\langle \bm{x}, \bm{x}' \rangle$ used in repair, these techniques may not ensure that the model’s outputs satisfy fairness constraints.
Second, these approaches use a set of discriminatory input pairs $\langle \bm{x}, \bm{x}^{\prime} \rangle$ to
analyze and repair model behavior, where $\bm{x}^{\prime}$ is chosen from $\mathcal{S}(\bm{x})$ (the set of all instances similar to $\bm{x}$). 
As such, they only observe and correct unfairness over a limited region of $\mathcal{S}(\bm{x})$.
Consequently, the repaired model may still exhibit unfair behavior on unseen or unsampled inputs, resulting in limited generalization.

\textbf{Our insight} - To address the above challenges, our key idea is to leverage interval bound propagation (a widely used NN verification technique) to soundly capture model outputs over $\mathcal{S}(\bm{x})$.
The derived bounds can be utilized to guide fairness repair, which encourages the model to produce consistent outputs across all similar instances.
To further provide guarantees for repair, our intuition is to exploit these bounds as a bridge to integrate fairness constraints and model modifications to construct a unified constraint-solving formulation. 
The solution to this problem can induce a repaired model with guaranteed fairness over $\mathcal{S}(\bm{x})$.

\textbf{Our solution - }
Based on the above insight, we propose \name, a novel provable NN fairness repair framework.
As illustrated in Fig.~\ref{fig:overview}, \name consists of two core components.
Given a DNN $f$ (which can be sliced as the first $\mathrm{L}-1$ layers $f_{1:\mathrm{L}}$ and the last layer $f_{\mathrm{L}}$), the first step applies interval bound propagation to calculate the concrete bounds that soundly capture the outputs of $f_{1:\mathrm{L}}$ (i.e., outputs in feature space).
These bounds define axis-aligned hyperrectangles, which enables us to directly tighten them to mitigate feature differences.
In the second step, a naive approach would be to use these concrete bounds to construct a constraint-solving problem, where the parameter changes of the final layer $f_{\mathrm{L}}$ are treated as optimization variables.
However, since concrete bounds are often overly conservative, \name synthesizes symbolic bounds to formulate a more precise problem, thereby avoiding excessive modifications.
To make the new problem solvable by off-the-shelf solvers, we introduce the dual theorem to eliminate nonlinear operations while preserving soundness.
Finally, \name establishes a Mixed-Integer Linear Programming (MILP) problem, ensuring that the solution induces a repaired model that is provably fair over the given set $\mathcal{S}(\bm{x})$.

We have implemented \name as a self-contained toolkit and evaluated it on four popular datasets involving various sensitive attributes.
The results demonstrate its effectiveness in correcting the unfairness over the given set $\mathcal{S}(\bm{x})$ with provable guarantees, and its substantial improvements in generalization over existing state-of-the-art repair methods.
On average, \name achieves 95.93\% and 93.16\% relative fairness improvement on the full dataset and the full input space, where the best baseline achieves only 71.44\% and 72.67\%.
For the setting with multiple sensitive attributes and more practical fairness definition, \name also exhibits remarkable generalization, keeping around 90\% fairness improvement.
Additional experiments with four state-of-the-art fairness testing frameworks further confirm its consistent effectiveness.

To summarize, this paper makes the following contributions:
\begin{itemize}[left=0pt]
    \item We present \name, a novel framework for provable neural network fairness repair with three key technical innovations:
    \begin{enumerate}
        \item We leverage interval bound propagation to soundly capture the model outputs, and design a bounds tightening process to effectively mitigate biased behavior.
        \item We synthesize symbolic bounds to precisely encode fairness constraints and model modifications into a unified constraint-solving formulation with provable guarantees.
        \item We leverage the duality theorem to eliminate nonlinearities and construct an MILP tractable by existing solvers.
    \end{enumerate}
    \item  We evaluate \name on four widely adopted benchmark datasets and demonstrate that it significantly outperforms the state-of-the-art in terms of provable guarantees and generalization to unseen samples.
    \item We release the code and scripts for this paper at \url{https://github.com/nninjn/ProF} to facilitate future studies.
\end{itemize}

\section{Preliminary}
\noindent \textbf{DNNs.} In this work, we focus on DNN models for binary classification tasks. A DNN can be represented as a function \(f: \mathbb{R}^m \to \mathbb{R}\), which maps a high-dimensional input \(\bm{x} \in \mathbb{R}^m\) to an output \(f(\bm{x}) \in \mathbb{R}\). 
It typically consists of an input layer \(f_1\), several hidden layers \(\{f_2, \cdots, f_{\mathrm{L}-1}\}\), and an output layer \(f_\mathrm{L}\).
The first \(\mathrm{L}-1\) layers, denoted as \(f_{1:\mathrm{L}} \eqdef f_{\mathrm{L}-1} \circ \cdots \circ f_1\), 
transform the input into a \( d \)-dimensional feature space, i.e., \( f_{1:\mathrm{L}}(\bm{x}) \in \mathbb{R}^d \),
while the final layer \(f_\mathrm{L}\) makes the classification.
The model predicts the positive class if \( f(\bm{x}) \geq 0 \) and the negative class otherwise.
Given a training dataset $\mathcal{D}_{\text{train}} $, a DNN is trained by minimizing binary cross-entropy (BCE) loss $\ell$: 
\[
\min_f \frac{1}{|\mathcal{D}_{\text{train}}|} \sum_{(\bm{x}^i,y^i) \in \mathcal{D}_{\text{train}}} \ell(\sigma( f(\bm{x}^i)), y^i)
\]
where $\sigma$ and $y^i \in \{0, 1\}$ denote the sigmoid function and the true label for the input $\bm{x}^i$, respectively.

\noindent \textbf{Interval Bound Propagation.} 
Owing to its efficiency in computing NN output ranges, interval bound propagation is widely used in NN verification~\cite{deeppoly, zhang2018efficient, gehr2018ai2, paulsen2022example, wang2018formal, deepsrgrextended}. 
The core idea involves propagating interval bounds layer-wise through the model, starting from predefined input domains.
In contrast to exact verification methods~\cite{bunel2020branch, marabou, katz2017reluplex} that rely on SAT solvers, interval arithmetic offers scalable approximations by symbolizing each neuron's value as an interval derived from its predecessor layer.
Here we briefly describe how it propagates the bounds through the linear layer and the activation layer (using ReLU as an example). 
For a neuron \(\mathrm{h}\) after a linear transformation with weights \(\mathbf{W}\) and bias \(b\), its output bounds are computed as:
\begin{equation}
\label{eq:ibp-linear}
\begin{split}
\underline{\mathrm{h}} &= b + \sum_{i} \left( \max(0, \mathrm{W}_i) \cdot \underline{\mathrm{z}}_i + \min(0, \mathrm{W}_i) \cdot \overline{\mathrm{z}}_i \right), \\
\overline{\mathrm{h}} &= b + \sum_{i} \left( \max(0, \mathrm{W}_i) \cdot \overline{\mathrm{z}}_i + \min(0, \mathrm{W}_i) \cdot \underline{\mathrm{z}}_i \right)
\end{split}
\end{equation}
where \(\underline{\mathrm{z}}_i, \overline{\mathrm{z}}_i\) are the bounds of the \(i\)-th neuron in the predecessor layer. 
For an unstable ReLU neuron \(\mathrm{h} = \mathrm{ReLU}(\mathrm{z})\) with input interval \([l, u]\) where \(l < 0 < u\), a sound interval propagation
can be derived via linear relaxation in various ways~\cite{zhang2018efficient, wang2018efficient}. 
Here we show a simple triangular form:
\begin{equation}
\label{eq:ibp-relu}
0 \leq \mathrm{h} \leq \frac{u}{u - l} \cdot (\mathrm{z} - l ) , \;
\left[\underline{\mathrm{h}}, \overline{\mathrm{h}}\right]=[0, \, u]
\end{equation}

\begin{figure}[t]
\centering
\includegraphics[width=1.0\linewidth]{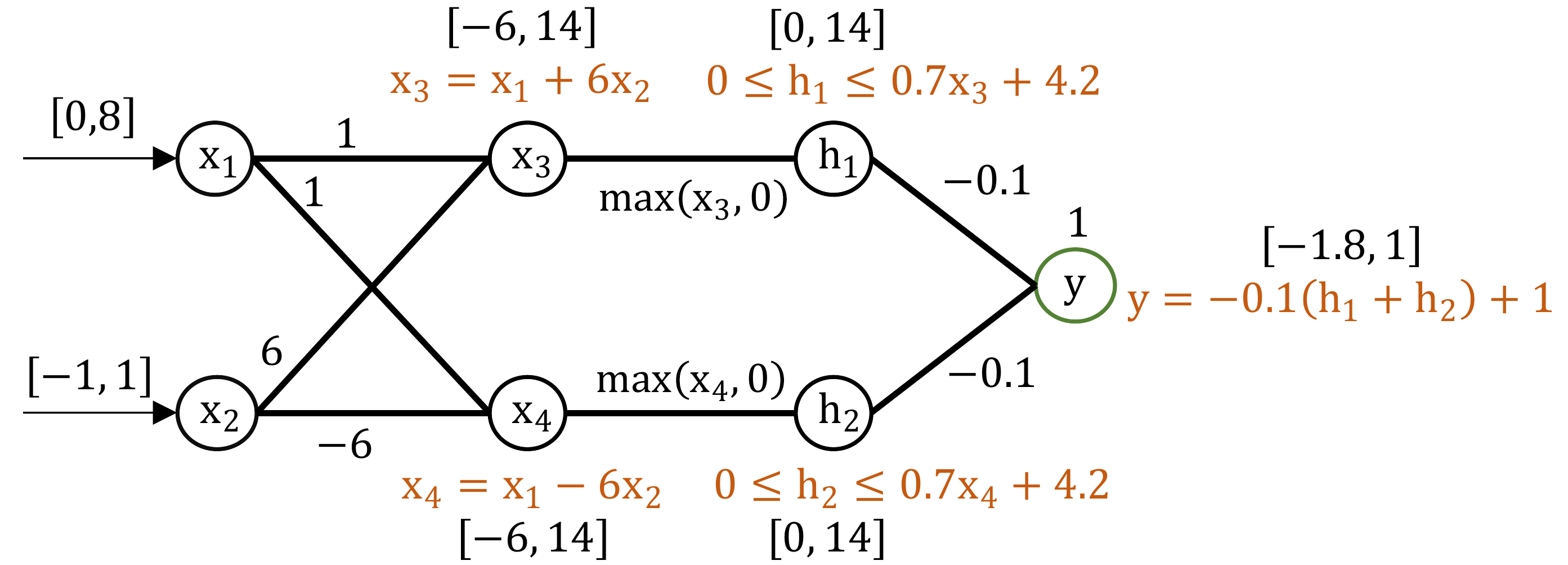}
\caption{Interval bound propagation on an example NN. All neurons have zero bias except the output neuron (bias=1).}
\label{fig:ibp-example}
\end{figure}

\begin{figure*}[!t]
\centering
\includegraphics[width=0.96\textwidth]{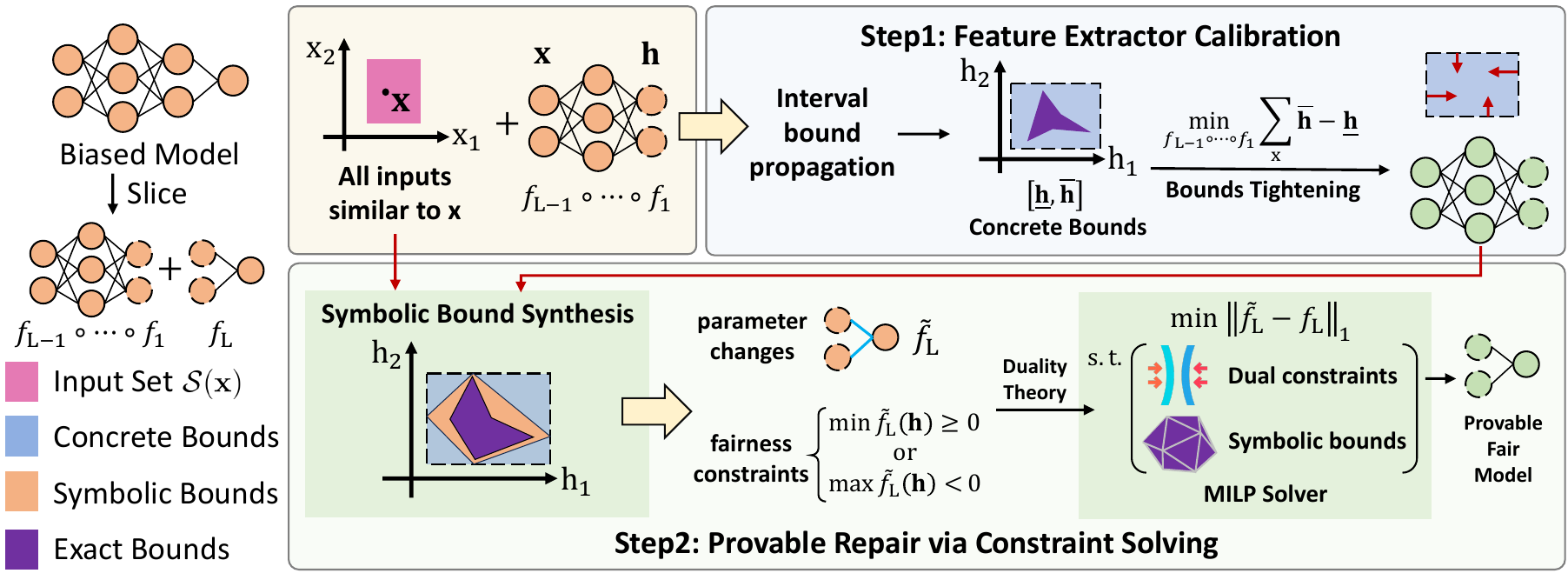}
\caption{The overview of \name framework. For notational simplicity, we use \( f_{1:\mathrm{L}} \) to denote \( f_{\mathrm{L}-1} \circ \cdots \circ f_1 \) hereafter.}
\label{fig:overview}
\end{figure*}

\begin{example}
\label{exam:ibp}
Fig.~\ref{fig:ibp-example} illustrates interval bound propagation on a simple NN, where \textcolor[HTML]{C55A11}{the brown equations} denote the neurons' symbolic bounds and the black square brackets are concrete bounds.
The input \(x_1\) and \(x_2\) are bounded by \([0, 8]\) and \([-1, 1]\).  
Using Eq.~\eqref{eq:ibp-linear}, the interval propagation for the first layer concretizes the symbolic expressions $x_3 = x_1 + 6x_2$ and $x_4 = x_1 - 6x_2$ to concrete bounds $[-6,14]$ and $[-6,14]$.
Then the triangular relaxation from Eq.~\eqref{eq:ibp-relu} is used to calculate the bounds of \(h_1\) and \(h_2\), refining both to \([0,14]\).
The output \(y\) finally yields the bound \([-1.8, 1]\). 
Notably, this bound is usually conservative due to the input dependencies~\cite{de2004affine, moore2009introduction} of interval arithmetic, which assumes that the propagation of each neuron is independent.
For example, the minimum \(y=-1.8\) implies \(h_1=h_2=14\), where no valid input \((x_1,x_2)\) can simultaneously achieve \(h_1=14\) and \(h_2=14\).
We analyze how this issue impacts the repair process and present our solution in Sec.~\ref{sec:concrete-repair} and Sec.~\ref{sec:symbolic-repair}.
\end{example}

\noindent \textbf{Individual Fairness.}
As established in prior work~\cite{zhang2020white, zheng2022neuronfair, chen2024fairness}, individual fairness (IF) requires a model to produce consistent predictions for similar individuals who differ only in sensitive attributes. 
Let \(\bm{x} = (x_1, x_2, \cdots, x_m) \in \mathbb{R}^m\) be an input. 
We define \(\mathcal{S}(\bm{x})\) as the set of all inputs similar to \(\bm{x}\): 
\[ \mathcal{S}(\bm{x}) =
\left\{ \bm{x}^{\prime} \mid \exists j \in \mathrm{P}, \, x_{j} \neq x^{\prime}_j ; \, \forall j \in \mathrm{NP}, \, x_{j} = x^{\prime}_j \right\}\]
where \(x_j\) and \(x^{\prime}_j\) denote the values of the $j$-th attribute in $\bm{x}$ and $\bm{x}^{\prime}$, \(\mathrm{P}\) is the set of sensitive/protected attribute indices, and \(\mathrm{NP}\) represent the set of non-sensitive attribute indices.
Recent work~\cite{biswas2023fairify, john2020verifying} further relaxes this notion by allowing small perturbations on non-sensitive attributes. Specifically, a relaxed neighborhood \(\mathcal{S}(\bm{x})\) is defined as:
\[
\mathcal{S}(\bm{x}) =
\left\{ \bm{x}^{\prime} \mid \exists j \in \mathrm{P}, \, x_{j} \neq x^{\prime}_j ; \, \forall j \in \mathrm{NP}, \, |x_j - x^{\prime}_j| \leq \epsilon_j \right\}
\]
where \(\epsilon_j > 0\) limits allowable variations on non-sensitive attribute $j$, reflecting that small differences (e.g., age differences of a few years) may not violate the fairness relation.

Building on above notions, we define an input \(  \bm{x} \) as an Individual Discriminatory Instance (IDI) if there exists \( \bm{x}^{\prime}\in \mathcal{S}(\bm{x}) \) such that:
\(
\left(f(\bm{x}) \ge 0 \wedge f(\bm{x}^{\prime}) < 0 \right) \, \vee \, \left(f(\bm{x}) < 0 \wedge f(\bm{x}^{\prime}) \ge 0 \right)
\).
Such a pair \(\langle \bm{x}, \bm{x}^{\prime} \rangle\) is referred to as an IDI pair.
Given a NN $f$, we further define \(\mathcal{U}(f, \bm{x}, \mathcal{S}(\bm{x}) )\)  as the set of all inputs in \(\mathcal{S}(\bm{x}) \) that result in unfair classification under \(f\):
\begin{equation}
\label{eq:U-set}
\begin{aligned}
\mathcal{U}(f,\bm{x},\mathcal{S}(\bm{x}) ) =\{ \bm{x}^\prime \in \mathcal{S}(\bm{x}) \mid(f(\bm{x})  \geq 0 \wedge f(\bm{x}^\prime) < 0 ) \\ 
\vee (f(\bm{x})  < 0 \wedge f(\bm{x}^\prime) \geq 0 )\} 
\end{aligned}
\end{equation}

\begin{example}
\label{exam:fairness}
Let us revisit Fig.~\ref{fig:ibp-example}.
Suppose we have an input \(\bm{x} = (4, 0)\) with \(f(\bm{x}) = 0.2\), where \(x_1\) is the protected attribute ranging over \([0, 8]\), and \(x_2\) is a non-sensitive attribute with \(\epsilon = 1\). The neighborhood \(\mathcal{S}(\bm{x})\) is thus defined as \(\{\bm{x}' \mid 0 \le x_1' \le 8,\ -1 \le x_2' \le 1\}\).  
It is easy to verify that there exists \(\bm{x}' \in \mathcal{S}(\bm{x})\) such that \(f(\bm{x}') < 0\); for example, \(\bm{x}' = (8, 1)\) yields \(f(\bm{x}') = -0.6\).  
Therefore, \(\bm{x}\) constitutes an IDI, and \(f\) violates individual fairness at this input.
\end{example}

\noindent \textbf{Problem Formulation.} We now formalize our repair problem.
\begin{definition}[\textbf{The IF repair problem}]
Given a DNN \(f\) and a repair set \(\mathcal{D}_{\text{r}}=\{\bm{x}^1, \bm{x}^2, \ldots, \bm{x}^n\}\), where \(\bm{x}^i\) denotes the $i$-th input. The goal of provable repair is to find a modified DNN \(\tilde{f}\) that guarantees IF over $\mathcal{S}(\bm{x}^i)$ for every \(\bm{x}^i \in \mathcal{D}_{\text{r }}\), that is:
\begin{equation}
\begin{aligned}
\forall \bm{x}^i \in \mathcal{D}_{\text{r}}, \; \forall\bm{x} \in \mathcal{S}(\bm{x}^i): \mathcal{U}(\tilde{f},\bm{x},\mathcal{S}(\bm{x}^i) )  = \emptyset
\\
\end{aligned}
\end{equation}
\end{definition}
\noindent We also aim for the repair to exhibit generalization by mitigating unfairness on inputs beyond $\mathcal{D}_{\text{r}}$. 
Consistent with prior repair work~\cite{chen2024isolation, sun2022causality}, we assume access to a small set $\mathcal{D}_{\text{c}}$ of training data to preserve the model's original performance.

\section{Methodology}
\subsection{Overview}
Fig.~\ref{fig:overview} presents \name, our provable fairness repair framework for DNNs.
It consists of two main components: (1) unfair feature extractor calibration via \textbf{concrete bounds tightening}, and (2) provable repair through \textbf{symbolic bounds synthesis and constraint-solving}.
The first step operates in the feature space: by propagating interval bounds through the network, \name establishes sound over-approximation for the features of all similar individuals. 
Then a progressively bounds tightening process is designed to rectify the biased feature extractor, promoting feature consistency on similar individuals.
In the second step, symbolic bounds (which are tighter than concrete bounds) are synthesized and combined with the fairness constraints to formulate a more precise constraint-solving problem that provides guarantees for repair. 
To eliminate the nonlinear operations, we introduce the dual theorem and transform the original problem into an MILP, while preserving soundness.

\subsection{Correcting Biased Feature Extractor}
The objective of this phase is to calibrate the model's feature extraction part $f_{1:\mathrm{L}}$ so that it produces features as consistent as possible for all inputs in the similar set $\mathcal{S}$.
The ideal scenario is that for any inputs $\bm{x}, \bm{x}' \in \mathcal{S}(\bm{x}^i)$, $f_{1:\mathrm{L}}$ produces identical features and thus $f$ makes the same classification, i.e.,
\begin{equation}
\label{eq:ideal}
    \forall \bm{x}, \bm{x}^{\prime}\in\mathcal{S}(\bm{x}^i): f_{1: \mathrm{L}}(\bm{x}) = f_{1: \mathrm{L}}(\bm{x}^{\prime})
\end{equation}
Some prior works~\cite{li2024runner, chen2024isolation, quan2025dissecting} pursue similar goals by either: (1) 
biased neuron identification via differential analysis on given IDI pairs followed by parameter modification, or (2) retraining the models to minimize the distances between features extracted from the IDI pairs.
In other words, they aim to reduce the distances between specific feature pairs (discrete point pairs in the purple region in Fig.~\ref{fig:overview}, Step 1).
However, these approaches rely on IDI pair $\langle \bm{x}, \bm{x}^{\prime} \rangle$ generated either by sampling within the similar input set $\mathcal{S}(\bm{x})$ or existing testing methods~\cite{zhang2020white,wang2024maft}.
This sampling-driven nature inherently suffers from incomplete coverage of $\mathcal{S}(\bm{x})$, as finite samples cannot exhaustively represent the entire set. 
Consequently, some uncovered instances may still retain large feature discrepancies and thus lead to unfair classifications.

To address this problem, we propose utilizing interval bound propagation to soundly \emph{characterize} and \emph{mitigate} the feature differences across the neighborhood set $\mathcal{S}(\bm{x})$.
Unlike discrete sampling methods, our approach over-approximates the model's feature-space outputs. 
Specifically, we employ auto-LiRPA~\cite{xu2020automatic} to calculate\footnote{There are several other interval propagation tools, such as DeepPoly~\cite{deeppoly} and DeepZ~\cite{deepz}, which vary in precision. 
We choose auto-LiRPA~\cite{xu2020automatic} due to its balanced trade-off between efficiency and precision.} the concrete bounds $[\underline{\mathbf{h}}^i, \overline{\mathbf{h}}^i] \subset \mathbb{R}^d$ for each set $\mathcal{S}(\bm{x}^i)$, such that:
\begin{equation}
\forall i \in [n] , \, \bm{x} \in \mathcal{S}(\bm{x}^i): \; f_{1: \mathrm{L}}(\bm{x}) \in \left[\underline{\mathbf{h}}^i, \overline{\mathbf{h}}^i \right]
\end{equation}
where $[n]$ denotes the set $\{1, 2, \dots ,n\}$.
As shown in Fig.~\ref{fig:overview} (Step 1, blue region), the derived interval $[\underline{\mathbf{h}}^i, \overline{\mathbf{h}}^i]$ forms an axis-aligned hyperrectangle that provably encloses the exact feature set $\mathcal{H}^i \eqdef \{f_{1:\mathrm{L}}(\bm{x}) \mid \bm{x} \in \mathcal{S}(\bm{x}^i)\}$ (purple region).
The axis-aligned property enables us to upper bound the maximum distance $L_{dis}$ between features in $\mathcal{H}^i$:
\begin{equation}
\begin{aligned}
L_{dis} \eqdef &\max_{\substack{\bm{x}, \,  \bm{x}^{\prime} \in \mathcal{S}(\bm{x}^i)}}  \|f_{1: \mathrm{L}}(\bm{x}) - f_{1: \mathrm{L}}(\bm{x}^{\prime})\|_1 \\
=& \max_{\mathbf{h},\mathbf{h}^{\prime} \in \mathcal{H}^i} \|\mathbf{h}-\mathbf{h}^{\prime}\|_1
\leq \sum_{j=1}^d \max_{\mathbf{h},\mathbf{h}^{\prime} \in \mathcal{H}^i}  |\mathbf{h}_j-\mathbf{h}^{\prime}_j| \\
\leq  & \|\overline{\mathbf{h}}^i - \underline{\mathbf{h}}^i  \|_{1}
\end{aligned}
\end{equation}
With the derived upper bounds in hand, we design a progressive bounds tightening process to mitigate the bias in $f_{1: \mathrm{L}}$.
This procedure iteratively reduces the $\ell_1$-norm of the concrete bounds, thereby contracting the maximum feature distance.
As detailed in Alg.~\ref{algorithm:feature-correct} (Lines 2–4), the process begins by computing initial concrete bounds for each input $\bm{x}^i \in \mathcal{D}_{\text{r}}$ over its neighborhood set $\mathcal{S}(\bm{x}^i)$. 
It then updates the concrete bounds and minimizes a normalized objective, which measures the relative $\ell_1$-norm reduction between current and original bounds.
This ratio, bounded within $[0,1]$, directly quantifies the degree of bias mitigation, where zero corresponds to the ideal scenario described in Eq.~\eqref{eq:ideal}.
We aggregate these ratios across all inputs in $\mathcal{D}_{\text{r}}$ to construct the fairness loss $\mathcal{L}_{\text{fair}}$ (Line 8 of Alg.~\ref{algorithm:feature-correct}).
To preserve the model's performance, the BCE loss $\mathcal{L}_{\text{bce}}$ is computed on a small set of training data $\mathcal{D}_{\text{c}}$.
We optimize the model parameters to minimize the combined objective $\mathcal{L} = \mathcal{L}_{\text{fair}} + \mathcal{L}_{\text{bce}}$ through gradient descent.

\begin{algorithm}[t]
\small
\DontPrintSemicolon
\caption{Biased Feature Calibration}
\label{algorithm:feature-correct}
\SetKw{Parameter}{Parameter:}
\KwIn{Biased NN $f=f_{\mathrm{L}} \circ f_{1:\mathrm{L}}$, repair set \(\mathcal{D}_{\text{r}}\), a small set of training data \(\mathcal{D}_{\text{c}}\), maximum iteration $\mathrm{T}$
} 
\KwOut{Repaired feature extractor $\tilde{f}_{1:\mathrm{L}} $}

\For{$i \gets 1$ \KwTo $|\mathcal{D}_{\text{r}}|$}{
    $\mathcal{S}(\bm{x}^i) \eqdef $ Set of all inputs similar to $\bm{x}^i$
    
    $\left[ \underline{\mathbf{h}}^i, \overline{\mathbf{h}}^i  \right] \gets \textsc{ConcreteBounds}(f_{1:\mathrm{L}}, \mathcal{S}(\bm{x}^i))$

    $\mathrm{OriDiff}_i \gets \left\| \overline{\mathbf{h}}^i - \underline{\mathbf{h}}^i \right\|_1$
}

\For{$iter\leftarrow 1$ \KwTo $\mathrm{T}$}{

    \For{$i \gets 1$ \KwTo $|\mathcal{D}_{\text{r}}|$}{
        $\left[\underline{\mathbf{h}}^i, \overline{\mathbf{h}}^i \right] \gets \textsc{ConcreteBounds}(f_{1:\mathrm{L}}, \mathcal{S}(\bm{x}^i))$
        
        \mytcc{Update the concrete bounds}
    }

    $\mathcal{L}_{\text{fair}} \gets \frac{1}{\left|\mathcal{D}_{\text{r}}\right|} \sum_{\substack{i=1}}^{|\mathcal{D}_{\text{r}}|} \dfrac{\left \|\overline{\mathbf{h}}^i - \underline{\mathbf{h}}^i \right \|_{1}}{\mathrm{OriDiff}_i}$  

    \vspace{0.2em}
    $\mathcal{L}_{\text{bce}} \gets \frac{1}{\left|\mathcal{D}_{\text{c}}\right|} \sum_{\substack{(\bm{x},y)\in\mathcal{D}_{\text{c}}}} 
    \ell(\sigma( f(\bm{x})), y)$
    \vspace{0.2em}

    $f_{1:\mathrm{L}} \gets \textsc{GradientDescent}(f_{1:\mathrm{L}}; \mathcal{L}_{\text{fair}} + \mathcal{L}_{\text{bce}})$ 
}

\Return $f_{1:\mathrm{L}}$
\end{algorithm}

\noindent \textbf{Remark:} While the above process can effectively mitigate bias in the feature extractor (see Section~\ref{sec:exp-rq4}), the fixed-number gradient descent steps do not guarantee provable repair. 
The repaired feature extractor may still yield $\mathcal{L}_{\text{fair}} > 0$, indicating potential feature discrepancies within some $\mathcal{S}(\bm{x}^i)$ that could lead to unfair classifications. 
To address this problem, we proceed to repair the final layer $f_{\mathrm{L}}$ in the following sections. We first present a naive method that directly utilizes the concrete bounds, followed by our proposed advanced approach that incorporates symbolic bounds for enhanced precision.

\subsection{A Naive Provable Repair Method with Concrete Bounds}
\label{sec:concrete-repair}
In this section, we present a naive repair method for the final classification layer $f_{\mathrm{L}}$.
Our core idea is to formulate the repair as a constraint-solving problem and leverage existing solvers to obtain the solution with provable guarantees.
Formally, the fairness repair on $f_{\mathrm{L}}$ is formulated as follows:
\begin{align}
&\min_{\Delta \mathbf{W}, \Delta \mathrm{b}} \|\Delta \mathbf{W}\|_1 + \|\Delta \mathrm{b}\|_1\label{eq:min-obj} \\
\text{s.t.} \; & \forall i \in [n], \; \mathcal{H}^i = \{\tilde{f}_{1:\mathrm{L}}(\bm{x}) \mid \bm{x} \in \mathcal{S}(\bm{x}^i)\} \label{eq:exact-H}\\
&    \forall i \in [n], \; \min_{\substack{\mathbf{h} \in \mathcal{H}}^i } \tilde{f}_\mathrm{L}(\mathbf{h}) \geq 0 \vee \max_{\substack{\mathbf{h} \in \mathcal{H}}^i} \tilde{f}_\mathrm{L}(\mathbf{h}) < 0 \label{eq:output-IF} 
\end{align}
Here, $\tilde{f}_{1:\mathrm{L}}$ denotes the feature extractor calibrated by Alg.~\ref{algorithm:feature-correct} and $\tilde{f}_\mathrm{L}(\mathbf{h}) = (\mathbf{W} + \Delta\mathbf{W})\mathbf{h} + \mathrm{b} + \Delta \mathrm{b}$ is the repaired final layer, where $\mathbf{W} + \Delta\mathbf{W} \in \mathbb{R}^{1 \times d}$ and $\mathbf{h} \in \mathbb{R}^{d}$.
Eq.~\eqref{eq:min-obj} minimizes parameter modifications while Eq.~\eqref{eq:output-IF} and Eq.~\eqref{eq:exact-H} enforce fairness through output consistency across all inputs in $\mathcal{S}(\bm{x}^i)$.

We identify that existing solvers cannot directly handle the formulated problem due to two inherent complexities: (1) $\tilde{f}_{1:\mathrm{L}}$ includes multiple nonlinear layers and thus the exact feature set $\mathcal{H}^i$ is highly non-convex, and (2) the constraint in Eq.~\eqref{eq:output-IF} introduces multiplication terms $\Delta \mathbf{W}\mathbf{h}$, which are non-linear.
To address the former, a straightforward method is to use the concrete bounds again to over-approximate $\mathcal{H}^i$.
Specifically, we simplify Eqs.~\eqref{eq:exact-H} and~\eqref{eq:output-IF} as follows:
\begin{equation}
\label{eq:concrete-fair-cons}
\begin{aligned}
\forall i \in [n], \;& \mathrm{b} + \Delta \mathrm{b} + \min_{\mathbf{h} \in \left[\underline{\mathbf{h}}^i, \overline{\mathbf{h}}^i\right] }  (\mathbf{W} + \Delta \mathbf{W}) \mathbf{h} \geq 0 \\ \vee \; & \mathrm{b} + \Delta \mathrm{b} +\max_{\mathbf{h} \in \left[\underline{\mathbf{h}}^i, \overline{\mathbf{h}}^i \right] }  (\mathbf{W} + \Delta \mathbf{W}) \mathbf{h}  < 0\
\end{aligned}
\end{equation}
Since $[\underline{\mathbf{h}}^i, \overline{\mathbf{h}}^i ]$ over-approximates $\mathcal{H}^i$, Eq.\eqref{eq:concrete-fair-cons} consists of strengthened constraints that imply Eqs.\eqref{eq:output-IF} and~\eqref{eq:exact-H}.
Moreover, the above simplification enables us to introduce two sets of variables $\{\mathrm{P}_{i,j}\}_{j \in [d]}$ and $\{\mathrm{Q}_{i,j}\}_{j \in [d]}$ to capture the extreme values of $(\mathbf{W} + \Delta \mathbf{W}) \mathbf{h}$ in each dimension:
\begin{equation}
\label{eq:PQ}
\begin{aligned}
\mrm{P}_{i,j} \le (\mathbf{W}_j + \Delta \mathbf{W}_j) \underline{\mathbf{h}}^i_j 
\wedge \mrm{P}_{i,j} \le (\mathbf{W}_j + \Delta \mathbf{W}_j) \overline{\mathbf{h}}^i_j \\
\mrm{Q}_{i,j} \ge (\mathbf{W}_j + \Delta \mathbf{W}_j) \underline{\mathbf{h}}^i_j 
\wedge \mrm{Q}_{i,j} \ge (\mathbf{W}_j + \Delta \mathbf{W}_j) \overline{\mathbf{h}}^i_j 
\end{aligned}
\end{equation}
where $\underline{\mathbf{h}}_j^i$ and $\overline{\mathbf{h}}_j^i$ are the $j$-th dimensional endpoints of the hyperrectangle $[\underline{\mathbf{h}}^i, \overline{\mathbf{h}}^i ]$. 
Therefore, the summations over $j$ of $\mathrm{P}_{i,j}$ and $\mathrm{Q}_{i,j}$ can provide lower and upper bounds for $(\mathbf{W} + \Delta \mathbf{W}) \mathbf{h}$ across the whole hyperrectangle $[\underline{\mathbf{h}}^i, \overline{\mathbf{h}}^i]$ while avoiding the multiplication between $\mathbf{h}$ and $\Delta \mathbf{W}$, i.e.,
\begin{equation}
\begin{aligned}
\sum_{j\in [d]} \mrm{P}_{i,j} \leq \min_{\mathbf{h} \in \left[\underline{\mathbf{h}}^i, \overline{\mathbf{h}}^i \right] } 
(\mathbf{W} + \Delta \mathbf{W}) \mathbf{h}  \\
\sum_{j\in [d]} \mrm{Q}_{i,j} \geq \max_{\mathbf{h} \in \left[\underline{\mathbf{h}}^i, \overline{\mathbf{h}}^i \right] } 
(\mathbf{W} + \Delta \mathbf{W}) \mathbf{h}
\end{aligned}
\end{equation}
Let $\mathrm{LB}_i \eqdef \mathrm{b} + \Delta \mathrm{b} + \sum_{j\in [d]} \mrm{P}_{i,j}$ and $\mathrm{UB}_i \eqdef \mathrm{b} + \Delta \mathrm{b} + \sum_{j\in [d]} \mrm{Q}_{i,j}$, we then transform the repair problem as:
\begin{equation}
\label{eq:concrete-repair}
\begin{aligned}
&\min_{\Delta \mathbf{W}, \Delta \mathrm{b}}  \|\Delta \mathbf{W}\|_1 + \|\Delta \mathrm{b}\|_1\ \\ 
\text{s.t.} \; & \forall i \in [n]    ~\eqref{eq:PQ}, \;    \mathrm{LB}_i \geq 0 \vee \mathrm{UB}_i <0
\end{aligned}
\end{equation}
Note that the disjunction ($\vee$) can be handled using Big-M method (detailed in the next section).
The upper half of Fig.~\ref{fig:symbolic-moti} revisits Example~\ref{exam:ibp}, where the \textcolor[HTML]{2F5597}{blue} values indicate the parameter modifications from solving Problem~\eqref{eq:concrete-repair}, which enforces $\forall \mathrm{h}_1 \in [0,14], \mathrm{h}_2 \in[0,14]: (\frac{9}{140}-0.1)\times\mathrm{h}_1+(\frac{9}{140}-0.1) \times \mathrm{h}_2+1 \geq 0$ to ensure fairness.
The provable guarantees achieved by solving Problem~\eqref{eq:concrete-repair} are formally stated as:
\begin{theorem}
\label{thm:main-c}
Let $\mathcal{D}_{\text{r}}=\{\bm{x}^i\}_{i=1}^n$ be a set of inputs, each associated with a similarity neighborhood $\mathcal{S}(\bm{x}^i)$, and let $f=f_{\mathrm{L}}\circ \tilde{f}_{1:\mathrm{L}}$ be a DNN. 
Then any feasible solution $(\Delta \mathbf{W}, \Delta \mathrm{b})$ to the problem~\eqref{eq:concrete-repair} induces a repaired final layer $\tilde{f}_{\mathrm{L}}$ such that the composite model $\tilde{f}=\tilde{f}_{\mathrm{L}}\circ \tilde{f}_{1:\mathrm{L}}$ provably satisfies individual fairness on all $\mathcal{S}(\bm{x}^i)$, $ i \in [n]$.
\end{theorem}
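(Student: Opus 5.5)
The argument is a chain of soundness implications. The plan is to start from an arbitrary feasible $(\Delta \mathbf{W}, \Delta \mathrm{b})$ of Problem~\eqref{eq:concrete-repair} and push its constraints down to the original fairness requirement, i.e.\ $\mathcal{U}(\tilde{f},\bm{x},\mathcal{S}(\bm{x}^i))=\emptyset$ for every $\bm{x}\in\mathcal{S}(\bm{x}^i)$. Fix $i\in[n]$ and write $\mathbf{W}'=\mathbf{W}+\Delta\mathbf{W}$ and $\mathrm{b}'=\mathrm{b}+\Delta\mathrm{b}$.

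\textbf{Step 1: $\mathrm{LB}_i$ and $\mathrm{UB}_i$ bound the output over the box.} Take any $\mathbf{h}\in[\underline{\mathbf{h}}^i,\overline{\mathbf{h}}^i]$ and any coordinate $j$. The map $t\mapsto \mathbf{W}'_j t$ is linear in $t$, so on $[\underline{\mathbf{h}}^i_j,\overline{\mathbf{h}}^i_j]$ it attains its minimum and maximum at the endpoints. Hence
\[
\min(\mathbf{W}'_j\underline{\mathbf{h}}^i_j,\mathbf{W}'_j\overline{\mathbf{h}}^i_j)\le \mathbf{W}'_j\mathbf{h}_j\le \max(\mathbf{W}'_j\underline{\mathbf{h}}^i_j,\mathbf{W}'_j\overline{\mathbf{h}}^i_j).
\]
Constraint~\eqref{eq:PQ} then gives $\mathrm{P}_{i,j}\le \mathbf{W}'_j\mathbf{h}_j\le \mathrm{Q}_{i,j}$. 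Summing over $j$ and adding $\mathrm{b}'$ yields $\mathrm{LB}_i\le \tilde f_{\mathrm{L}}(\mathbf{h})\le \mathrm{UB}_i$ for all $\mathbf{h}$ in the box. No sign case split on $\mathbf{W}'_j$ is needed, because the $\min$/$\max$ of the two endpoint values covers both signs.

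\textbf{Step 2: transfer the bound to inputs.} By soundness of the concrete bounds computed for the calibrated extractor $\tilde f_{1:\mathrm{L}}$, every $\bm{x}\in\mathcal{S}(\bm{x}^i)$ satisfies $\tilde f_{1:\mathrm{L}}(\bm{x})\in[\underline{\mathbf{h}}^i,\overline{\mathbf{h}}^i]$. Here I take the bounds to be recomputed on the final $\tilde f_{1:\mathrm{L}}$, as in Alg.~\ref{algorithm:feature-correct}. Combining this with Step 1 gives
\[
\mathrm{LB}_i\le \tilde f(\bm{x})\le \mathrm{UB}_i \quad \text{for all } \bm{x}\in\mathcal{S}(\bm{x}^i).
\]

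\textbf{Step 3: case split on the disjunction.} Feasibility gives $\mathrm{LB}_i\ge 0$ or $\mathrm{UB}_i<0$. In the first case $\tilde f(\bm{x})\ge0$ for every $\bm{x}\in\mathcal{S}(\bm{x}^i)$. In the second case $\tilde f(\bm{x})<0$ for every such $\bm{x}$. Either way, any $\bm{x},\bm{x}'\in\mathcal{S}(\bm{x}^i)$ receive the same sign, so neither disjunct in the definition~\eqref{eq:U-set} can hold. Therefore $\mathcal{U}(\tilde f,\bm{x},\mathcal{S}(\bm{x}^i))=\emptyset$. Since $i$ was arbitrary, the repair condition holds for all $i\in[n]$.

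\textbf{Main obstacle.} The only delicate point is bookkeeping rather than mathematics: the bounds $[\underline{\mathbf{h}}^i,\overline{\mathbf{h}}^i]$ in~\eqref{eq:PQ} must be the sound bounds for the same $\tilde f_{1:\mathrm{L}}$ used in the composite model. I would state this explicitly. A secondary point is that the exact optimum of~\eqref{eq:concrete-repair} is irrelevant, since only feasibility is used. Likewise, the Big-M encoding of the disjunction must be exact, so that a feasible MILP solution genuinely satisfies $\mathrm{LB}_i\ge0\vee\mathrm{UB}_i<0$. I would assume this, or handle the strict inequality with a small margin.
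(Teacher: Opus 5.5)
Your proposal is correct and follows the paper's proof: bound $\tilde f_{\mathrm{L}}$ over the box via $\mathrm{LB}_i$ and $\mathrm{UB}_i$, transfer these bounds to $\mathcal{S}(\bm{x}^i)$ using $\mathcal{H}^i \subseteq [\underline{\mathbf{h}}^i, \overline{\mathbf{h}}^i]$, and conclude from the disjunction $\mathrm{LB}_i \geq 0 \vee \mathrm{UB}_i < 0$. Your endpoint argument for $\mathrm{P}_{i,j} \le (\mathbf{W}_j+\Delta\mathbf{W}_j)\mathbf{h}_j \le \mathrm{Q}_{i,j}$ and your requirement that the box be sound for the final $\tilde f_{1:\mathrm{L}}$ supply details the paper only asserts; note that Alg.~\ref{algorithm:feature-correct} computes its last bounds before the final gradient step, so they must be recomputed afterwards rather than taken from the algorithm.
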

\begin{proof}
Consider any feasible solution $(\Delta \dot{\mathbf{W}}, \Delta \dot{\mathrm{b}})$ to problem~\eqref{eq:concrete-repair}.
Recall that the sets $\mathcal{H}^i$ satisfy
$
\mathcal{H}^i = \{ f_{1:\mathrm{L}}(\bm{x}) \mid \bm{x} \in \mathcal{S}(\bm{x}^i) \} \subseteq [\underline{\mathbf{h}}^i, \overline{\mathbf{h}}^i]
$
and that the lower and upper bounds $\mathrm{LB}_i, \mathrm{UB}_i$ are constructed via the extreme values of auxiliary optimization problems $\mathrm{P}_{i,j}$ and $\mathrm{Q}_{i,j}$ to satisfy
\begin{equation*}
\begin{aligned}
    \mathrm{LB}_i \leq \min_{\mathbf{h} \in [\underline{\mathbf{h}}^i, \overline{\mathbf{h}}^i]} (\mathbf{W} + \Delta \dot{\mathbf{W}}) \mathbf{h} + \mathrm{b} + \Delta \dot{\mathrm{b}} \\
 \mathrm{UB}_i \geq \max_{\mathbf{h} \in [\underline{\mathbf{h}}^i, \overline{\mathbf{h}}^i]} (\mathbf{W} + \Delta \dot{\mathbf{W}}) \mathbf{h} + \mathrm{b} + \Delta \dot{\mathrm{b}}
\end{aligned}
\end{equation*}

Since $\mathcal{H}^i \subseteq [\underline{\mathbf{h}}^i, \overline{\mathbf{h}}^i]$, these bounds also hold over $\mathcal{H}^i$. By feasibility, the repair constraints ensure that either
$\mathrm{LB}_i \geq 0  \text{ or }  \mathrm{UB}_i < 0$, which guarantees that for all $\mathbf{h} \in \mathcal{H}^i$, the repaired model output $\tilde{f}_{\mathrm{L}}(\mathbf{h})$ is consistently non-negative or negative.
Consequently, the composite model $\tilde{f} = \tilde{f}_{\mathrm{L}} \circ \tilde{f}_{1:\mathrm{L}}$ provably satisfies individual fairness across all neighborhoods $\mathcal{S}(\bm{x}^i)$, $i \in [n]$.\qedhere
\end{proof}

\subsection{Provable Repair with Symbolic Bounds}
\label{sec:symbolic-repair}
We now identify a crucial problem of the proposed naive method: the concrete bound $[\underline{\mathbf{h}}^i, \overline{\mathbf{h}}^i]$ is typically too conservative to capture the exact feature set $\mathcal{H}^i$ (\textcolor[HTML]{7030A0}{purple} region in figure).
This leads to unnecessary modifications of the final layer 
in order to enforce fairness across the entire box $[\underline{\mathbf{h}}^i, \overline{\mathbf{h}}^i ]$ rather than just the exact set $\mathcal{H}^i$.
As illustrated in the upper half of Fig.~\ref{fig:symbolic-moti}, such conservativeness results in excessive changes that account for spurious points outside $\mathcal{H}^i$. 
This issue is exacerbated when the repair set $\mathcal{D}_{\text{r}}$ contains multiple inputs, requiring simultaneous satisfaction of fairness constraints over multiple sets $\mathcal{S}(\bm{x})$ (see Section~\ref{sec:exp-rq4}).

\begin{figure}[t]
\centering
\includegraphics[width=1.0\linewidth]{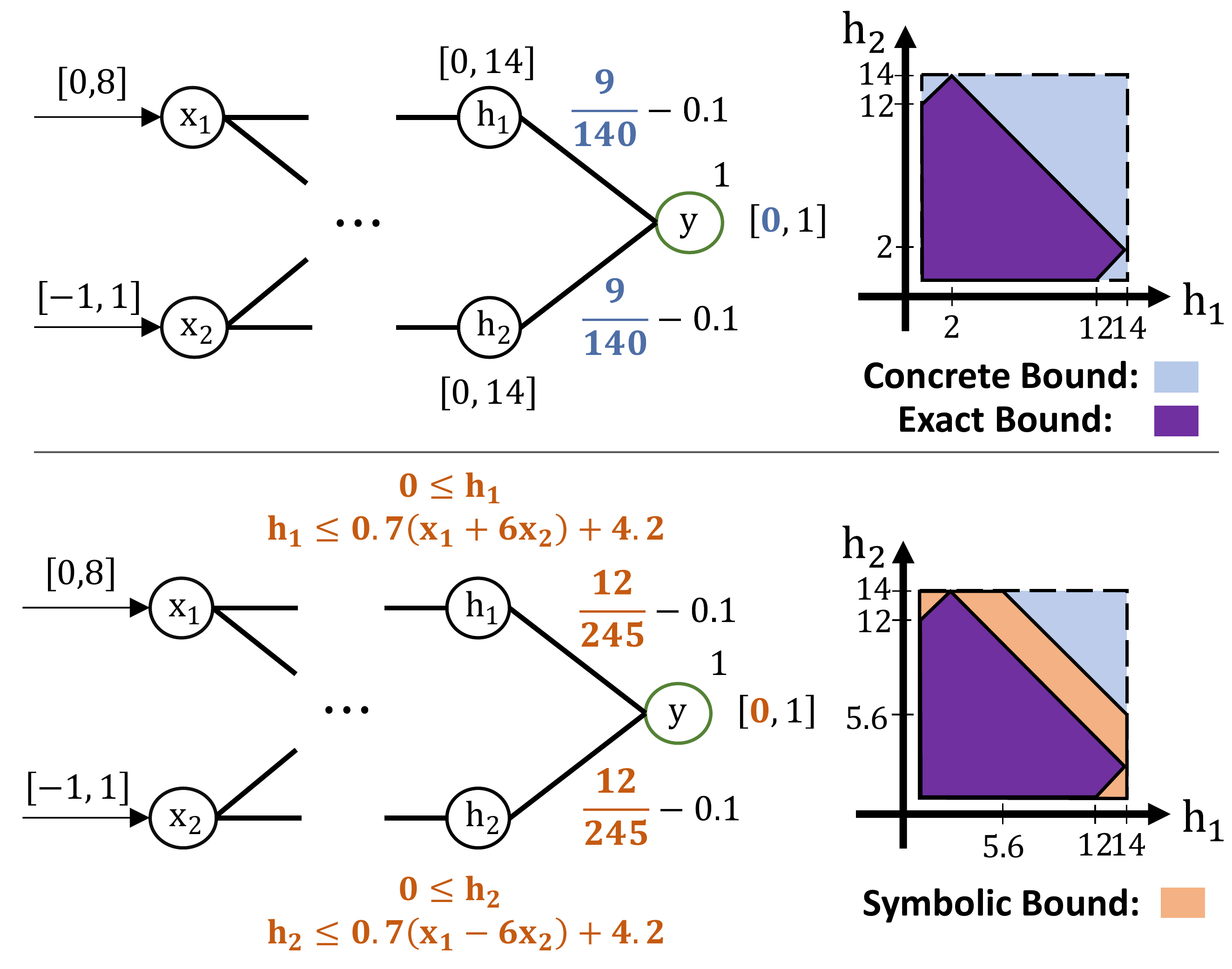}
\caption{
Comparison of repair with concrete bounds and symbolic bounds. 
Purple region shows the true feature set $\mathcal{H}^i$.}
\label{fig:symbolic-moti}
\end{figure}

As discussed in Example~\ref{exam:ibp}, the imprecision of concrete bounds fundamentally stems from their assumption of inter-neuron independence during propagation, which discards crucial variable dependencies. 
To capture $\mathcal{H}^i$ more precisely, we propose synthesizing \emph{symbolic bounds}, where each neuron's bounds are characterized as a linear combination of preceding variables, thereby preserving the dependencies.
Given a set $\mathcal{S}(\bm{x}^i)$, we first utilize auto-LiRPA (other bound propagation tools are also applicable) to calculate the linear bounds between $\mathbf{h}$ and $\bm{x}$, and further incorporate the input bounds to synthesize the symbolic bounds for repair as: 
\begin{equation}
\label{eq:def-hat-H}
\begin{aligned}
\widehat{\mathcal{H}}^i \eqdef \left\{ \mathbf{h} \,\middle|\, 
\bm{\alpha}_{i}^{\le}  \bm{x}+\bm{\beta}_{i}^{\le} \leq \mathbf{h} \leq \bm{\alpha}_{i}^{\ge}  \bm{x}+\bm{\beta}_{i}^{\ge}, \right. \\
\left. \mathcal{S}_{L}(\bm{x}^i) \leq \bm{x} \leq \mathcal{S}_{U}(\bm{x}^i) \right\}
\end{aligned}
\end{equation}
where $\bm{\alpha}_{i}^{\le}$, $\bm{\beta}_{i}^{\le}$, $\bm{\alpha}_{i}^{\ge}$ and $\bm{\beta}_{i}^{\ge}$ are linear coefficients and biases, $\mathcal{S}_{L}(\bm{x}^i)$ and $\mathcal{S}_{U}(\bm{x}^i)$ denote the lower bound and upper bound of $\mathcal{S}(\bm{x}^i)$, respectively.
\textcolor[HTML]{C55A11}{The brown equations} in Fig.~\ref{fig:symbolic-moti} denote the symbolic bounds for our running example:
\begin{equation*}
\begin{aligned}
\widehat{\mathcal{H}}^i = \left\{ \mathbf{h} \,\middle|\, 
  0 \leq \mathrm{h}_1 \leq 0.7(x_1 + 6x_2) + 4.2, 0 \leq x_1 \leq 8\right.\\
 \left.  0 \leq \mathrm{h}_2 \leq 0.7(x_1 - 6x_2) + 4.2  , -1 \leq x_2 \leq 1  \right\}
\end{aligned}
\end{equation*}
We can see that the region formed by $\widehat{\mathcal{H}}^i$ (\textcolor[HTML]{C55A11}{the brown area}) is more precise than the one formed by concrete bounds.

Once the symbolic bounds is obtained, the repair can be formulated to guarantee that fairness constraints hold over $\widehat{\mathcal{H}}^i$:
\begin{equation}
\label{eq:symbolic-repair}
\begin{aligned}
&\min_{\Delta \mathbf{W}, \Delta \mathrm{b}}  \|\Delta \mathbf{W}\|_1 + \|\Delta \mathrm{b}\|_1\ \\ 
\text{s.t. }&\min_{\mathbf{h} \in \widehat{\mathcal{H}}^i} \tilde{f}_{\mathrm{L}}(\mathbf{h}) \geq 0 \vee \max_{\mathbf{h} \in \widehat{\mathcal{H}}^i} \tilde{f}_{\mathrm{L}}(\mathbf{h}) < 0 \\
\end{aligned}
\end{equation}
However, while symbolic bounds provide tighter approximations, they consist of a set of general linear constraints on $\bm{x}$ and $\mathbf{h}$, forming a polytope with fundamentally different geometric structure from hyperrectangles.
Note that one of the core challenge of provable repair stems from the non-linear term $\Delta \mathbf{W} \mathbf{h}$ in fairness constraint $\min_{\mathbf{h}} (\mathbf{W} + \Delta \mathbf{W})\mathbf{h} \geq 0 \vee \max_{\mathbf{h}} (\mathbf{W} + \Delta \mathbf{W})\mathbf{h} < 0  $, where $\mathbf{h} \in [\underline{\mathbf{h}}^i, \overline{\mathbf{h}}^i] \text{ or } \widehat{\mathcal{H}}^i$. 
Although Section~\ref{sec:concrete-repair} handled this issue by exploiting the axis-aligned nature of $[\underline{\mathbf{h}}^i, \overline{\mathbf{h}}^i]$, the polytope $\widehat{\mathcal{H}}^i$ lacks this nature and renders the endpoint analysis in Eq.~\eqref{eq:PQ} inapplicable.

We leverage Lagrangian duality to address this problem. 
In the following, we take \(\min_{\mathbf{h} \in \widehat{\mathcal{H}}^i} \tilde{f}_{\mathrm{L}}(\mathbf{h}) \geq 0\) as an example to illustrate.
The key idea is to recast the minimization over $\mathbf{h}$ as a maximization over dual variables $\bm{\lambda}$, thereby decoupling the multiplication between $\Delta \mathbf{W}$ and $\mathbf{h}$ while preserving soundness. 
Specifically, we first rewrite $ \min _{\mathbf{h} \in \widehat{\mathcal{H}}_{i}} (\mathbf{W}+\Delta \mathbf{W})  \mathbf{h} $ as follows:
\begin{equation}
\label{eq:problem-Cp}
     \min _{\mathbf{p}}\mathbf{C}^{\top} \mathbf{p} \; \;\text{  s.t.  } \mathbf{A}_i  \mathbf{p} \leq \mathbf{D}_i
\end{equation}
where $\mathbf{C}= \left[\mathbf{W} + \Delta \mathbf{W}, \mathbf{0}^{1\times m}\right]^\top
 \in \mathbb{R}^{(m+d)\times1} ,  \mathbf{p} = \begin{pmatrix} \mathbf{h} \\ \bm{x} \end{pmatrix} \in \mathbb{R}^{m+d} $.
$\mathbf{A}_i \in \mathbb{R}^{(2m + 2d) \times (m+d)} $ and $\mathbf{D}_i \in \mathbb{R}^{2m+2d}$ are constant matrices that encode the definition of $\widehat{\mathcal{H}}^i$, constructed as:
\[
\mathbf{A}_i = \begin{bmatrix}
  -\mathbf{I}_{d} & \bm{\alpha}_{i}^{\le} \\
  \mathbf{I}_{d} & -\bm{\alpha}_{i}^{\ge} \\
  \mathbf{0} & -\mathbf{I}_{m} \\
  \mathbf{0} & \mathbf{I}_{m}
\end{bmatrix} , \; \mathbf{D}_i = 
\begin{pmatrix}
-\bm{\beta}_{i}^{\le} \\
\bm{\beta}_{i}^{\ge} \\
-\mathcal{S}_{L}(\bm{x}^i) \\
\mathcal{S}_{U}(\bm{x}^i)
\end{pmatrix}
\]
It is evident that $\mathbf{A}_i  \mathbf{p} \leq \mathbf{D}_i$ is equivalent to the definition of $\widehat{\mathcal{H}}^i$ given in Eq.~\eqref{eq:def-hat-H}.

Recalling Eq.~\eqref{eq:problem-Cp}, we observe that the variable $\Delta \mathbf{W}$ in $\mathbf{C}$ can be temporarily treated as a constant, since the feasible set $\{ {\mathbf{p} \mid \mathbf{A}_i\mathbf{p} \leq \mathbf{D}_i} \}$ is independent of $\Delta \mathbf{W}$.
In other words, problem~\eqref{eq:symbolic-repair} can be viewed as first solving the worst case $\mathbf{p}^*$ under a given $\Delta \mathbf{W}$, then optimizing $\Delta \mathbf{W}$ accordingly.
Under this observation, Eq.~\eqref{eq:problem-Cp} becomes a standard linear programming (LP). 
We define the Lagrangian function $L$ and the dual function $g$ to construct the dual problem of this LP:
\begin{equation}
\begin{aligned}
L(\mathbf{p}, \bm{\lambda}_i) &=\mathbf{C}^{\top} \mathbf{p}+ \bm{\lambda}^{\top}_i  (\mathbf{A}_i  \mathbf{p} - \mathbf{D}_i), \; \bm{\lambda}_i \geq0 \\
g(\bm{\lambda}_i) &= \inf_{\mathbf{p}} L(\mathbf{p}, \bm{\lambda}_i) 
\end{aligned}
\end{equation}
where $\bm{\lambda}_i \in \mathbb{R}^{2m+2d}$ are dual variables.
By maximizing the dual function, we obtain the optimality condition \(\frac{\partial L}{\partial \mathbf{p}}= \mathbf{C} + \mathbf{A}_i^{\top}  \bm{\lambda}_i=0 \Rightarrow \mathbf{A}_i^{\top} \bm{\lambda}_i=- \mathbf{C} \) and construct the dual problem:
\begin{equation}
\label{eq:dual-problem}
\begin{array}{l@{\quad \quad \quad}l}
(\text{Primal Problem}) & (\text{Dual Problem}) \\
\begin{aligned}[t]
    \min_{\mathbf{p}} \quad & \mathbf{C}^{\top} \mathbf{p} \\
    \text{s.t.} \quad & \mathbf{A}_i \mathbf{p} \leq \mathbf{D}_i
\end{aligned}  \; \;
& 
\begin{aligned}[t]
    \max_{\bm{\lambda}_i \geq 0} \quad & -\bm{\lambda}_i^{\top} \mathbf{D}_i \\
    \text{s.t.} \quad & \mathbf{A}_i^{\top} \bm{\lambda}_i = -\mathbf{C}
\end{aligned}
\end{array}
\end{equation}
In the dual problem, the non-linear term $\Delta \mathbf{W} \mathbf{h}$ is decoupled, as only $\bm{\lambda}_i$ are variables and both $\mathbf{D}_i, \mathbf{A}_i$ are constants.
We further note that the primal and dual problems attain the same optimal value due to the strong duality~\cite{boyd2004convex} for LP: 
\begin{equation}
\label{eq:strong-dual}
\min_{\mathbf{A}_i \mathbf{p} \leq \mathbf{D}_i} \mathbf{C}^{\top} \mathbf{p} = 
\max_{\substack{\mathbf{A}_i^{\top} \bm{\lambda}_i = -\mathbf{C} \\ \bm{\lambda}_i \geq 0}} 
- \bm{\lambda}_i^{\top} \mathbf{D}_i
\end{equation}
Similar to the previous section, we now define the variable \(\widehat{\mrm{LB}}_{i}\)
to soundly lower
bound the model's output over \(\widehat{\mathcal{H}}_{i}\):
\begin{equation}
\begin{array}{cc}
  \begin{aligned}
\widehat{\mrm{LB}}_{i}
&=\mathrm{b}+\Delta \mathrm{b}+ (-\bm{\lambda}_i^{\top}  \mathbf{D}_i) \\
&\le \mathrm{b}+\Delta \mathrm{b}+  \max_{\bm{\lambda}_i}  -\bm{\lambda}_i^{\top} \mathbf{D}_i \\
&= \mathrm{b}+\Delta \mathrm{b}+ \min_{\mathbf{A}_i \mathbf{p} \leq \mathbf{D}_i} \mathbf{C}^{\top} \mathbf{p}
\end{aligned}   & \text{ s.t. }\; \begin{aligned}
    &\mathbf{A}_i^{\top} \bm{\lambda}_i = -\mathbf{C} \\
    &\bm{\lambda}_i \geq 0
\end{aligned}\\
\end{array}
\end{equation}
\

Note that the upper bound (for $\mathrm{b}+\Delta \mathrm{b}+ \max_{\mathbf{A}_i \mathbf{p} \leq \mathbf{D}_i} \mathbf{C}^{\top} \mathbf{p}$) can similarly be derived using another set of dual variables $\bm{\eta}_i \in \mathbb{R}^{2m+2d}$.
We can now recast the problem~\eqref{eq:symbolic-repair} as following MILP:
\begin{equation}
\fontsize{9.5pt}{11.4pt}\selectfont
\begin{aligned}
&\quad\quad\quad\quad\min_{\substack{\Delta \mathbf{W}, \Delta \mathrm{b}, \mathbf{Z}, \bm{\lambda}, \bm{\eta}}} \|\Delta \mathbf{W}\|_1 + \|\Delta \mathrm{b}\|_1 \\
 &\text{s.t.} \;  \mathbf{Z} \in\{0,1\}^{n} , \ \mathbf{C}= \left[\mathbf{W} + \Delta \mathbf{W}, \mathbf{0}^{1\times m}\right]^\top \\
& \forall i \in [n] \begin{cases}
\mathbf{A}_i=\begin{bmatrix}
-\mathbf{I}_{d} & \bm{\alpha}_{i}^{\le} \\
\mathbf{I}_{d} & -\bm{\alpha}_{i}^{\ge} \\
\bm{0} & -\mathbf{I}_{m} \\
\bm{0} & \mathbf{I}_{m}
\end{bmatrix},\mathbf{D}_i = 
\begin{pmatrix}
-\bm{\beta}_{i}^{\le} \\
\bm{\beta}_{i}^{\ge} \\
-\mathcal{S}_{L}(\bm{x}^i) \\
\mathcal{S}_{U}(\bm{x}^i)
\end{pmatrix} \\
\mathbf{A}_i^{\top}  \bm{\lambda}_i = -\mathbf{C}, \; \bm{\lambda}_i \geq 0 \\
\mathbf{A}_i^{\top}  \bm{\eta}_i = \mathbf{C}, \; \bm{\eta}_i \geq 0 \\
 \widehat{\mrm{LB}}_{i}=\mathrm{b}+\Delta \mathrm{b}+ (-\bm{\lambda}_i^{\top}  \mathbf{D}_i) \\
 \widehat{\mrm{UB}}_{i}=\mathrm{b}+\Delta \mathrm{b}+ (\bm{\eta}_i^{\top}  \mathbf{D}_i) \\
 \widehat{\mrm{LB}}_{i} \geq  - \mrm{M} \cdot (1 - \mrm{Z}_i) \wedge
\widehat{\mrm{UB}}_{i} <  \mrm{M} \cdot \mrm{Z}_i  \\
\end{cases} \\
\end{aligned}
\label{eq:milp}
\end{equation}
In the last row we employ the Big-M method to convert the disjunction operation to conjunction, where \(\mrm{M}\) is a sufficiently large constant.
This MILP formulation eliminates the non-linear term $\Delta \mathbf{W} \mathbf{h}$ while maintains soundness through strong duality.
The provable repair guarantees achieved by solving Problem~\eqref{eq:milp}, along with its superiority over Problem~\eqref{eq:concrete-repair} in terms of objective value, are formally stated as follows:

\begin{theorem}
\label{thm:main}
Let $\mathcal{D}_{\text{r}}=\{\bm{x}^i\}_{i=1}^n$ be a set of inputs, each associated with a similarity neighborhood $\mathcal{S}(\bm{x}^i)$, and let $f=f_{\mathrm{L}}\circ \tilde{f}_{1:\mathrm{L}}$ be a DNN. 
Then any feasible solution $(\Delta \mathbf{W}, \Delta \mathrm{b}, \mathbf{Z}, \bm{\lambda}, \bm{\eta})$ to the problem~\eqref{eq:milp} induces a repaired final layer $\tilde{f}_{\mathrm{L}}$ such that the composite model $\tilde{f}=\tilde{f}_{\mathrm{L}}\circ \tilde{f}_{1:\mathrm{L}}$ provably satisfies individual fairness on all $\mathcal{S}(\bm{x}^i)$, $ i \in [n]$.
Moreover, the optimal solution of problem~\eqref{eq:milp} is guaranteed to be no worse than that of problem~\eqref{eq:concrete-repair} (i.e., it achieves an objective value that is no larger).
\end{theorem}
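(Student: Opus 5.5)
The proof has two parts. Soundness will use only weak LP duality. The optimality comparison will use strong duality together with a containment between the symbolic region $\widehat{\mathcal{H}}^i$ and the concrete box.

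\textbf{Soundness.} Fix a feasible solution $(\Delta \mathbf{W}, \Delta \mathrm{b}, \mathbf{Z}, \bm{\lambda}, \bm{\eta})$ of~\eqref{eq:milp} and an index $i\in[n]$.

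1. I first argue $\mathcal{H}^i \subseteq \widehat{\mathcal{H}}^i$ in the following sense. For every $\bm{x}\in\mathcal{S}(\bm{x}^i)$, the point $\mathbf{p}=(\tilde f_{1:\mathrm{L}}(\bm{x}),\bm{x})$ satisfies $\mathbf{A}_i\mathbf{p}\le \mathbf{D}_i$. This holds because the linear bounds returned by auto-LiRPA are sound relaxations valid for all $\bm{x}$ in the input box, and $\mathcal{S}_L(\bm{x}^i)\le \bm{x}\le \mathcal{S}_U(\bm{x}^i)$.

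2. For any such $\mathbf{p}$, weak duality gives a lower bound directly. Since $\bm{\lambda}_i\ge 0$ and $\mathbf{A}_i^\top\bm{\lambda}_i=-\mathbf{C}$,
\[
\mathbf{C}^\top\mathbf{p}=-\bm{\lambda}_i^\top\mathbf{A}_i\mathbf{p}\ \ge\ -\bm{\lambda}_i^\top\mathbf{D}_i .
\]
Hence $\tilde f_{\mathrm{L}}(\mathbf{h}) \ge \widehat{\mathrm{LB}}_i$.

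3. Symmetrically, since $\bm{\eta}_i\ge 0$ and $\mathbf{A}_i^\top\bm{\eta}_i=\mathbf{C}$, we get $\mathbf{C}^\top\mathbf{p}=\bm{\eta}_i^\top\mathbf{A}_i\mathbf{p}\le \bm{\eta}_i^\top\mathbf{D}_i$. Hence $\tilde f_{\mathrm{L}}(\mathbf{h})\le \widehat{\mathrm{UB}}_i$.

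4. The Big-M row then splits into two cases.
   - If $\mathrm{Z}_i=1$, it forces $\widehat{\mathrm{LB}}_i\ge 0$, so every output on $\mathcal{S}(\bm{x}^i)$ is non-negative.
   - If $\mathrm{Z}_i=0$, it forces $\widehat{\mathrm{UB}}_i<0$, so every output is negative.

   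In both cases $\mathcal{U}(\tilde f,\bm{x},\mathcal{S}(\bm{x}^i))=\emptyset$, exactly as in the proof of Theorem~\ref{thm:main-c}. Strong duality is not needed for this direction.

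\textbf{No-worse objective.} I will show that every feasible $(\Delta\mathbf{W},\Delta\mathrm{b})$ of~\eqref{eq:concrete-repair} extends to a feasible point of~\eqref{eq:milp} with the same objective value. Minimizing over a superset of objective values then gives the claim.

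1. The key structural fact is that the $\mathbf{h}$-projection of $\widehat{\mathcal{H}}^i$ lies in $[\underline{\mathbf{h}}^i,\overline{\mathbf{h}}^i]$. I will justify this from the construction of the concrete bounds: in auto-LiRPA they are obtained by concretizing these same linear bounds over the input box. So $\min_{\bm{x}}(\bm{\alpha}^{\le}_i\bm{x}+\bm{\beta}^{\le}_i)\ge \underline{\mathbf{h}}^i$, and similarly for the upper bound. If the paper's concrete bounds were computed differently, I would state this containment as the assumption under which the comparison holds.

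2. Given the containment, the constraint $\mathrm{LB}_i\ge 0$ yields
\[
\min_{\mathbf{A}_i\mathbf{p}\le\mathbf{D}_i}\mathbf{C}^\top\mathbf{p}+\mathrm{b}+\Delta\mathrm{b}\ \ge\ \min_{\text{box}}(\mathbf{W}+\Delta\mathbf{W})\mathbf{h}+\mathrm{b}+\Delta\mathrm{b}\ \ge\ \mathrm{LB}_i\ \ge 0 .
\]
The analogous chain holds for $\mathrm{UB}_i<0$, and the strict inequality is preserved.

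3. To pass from primal optima to dual certificates I need strong duality, Eq.~\eqref{eq:strong-dual}, to hold with the dual maximum attained. The primal polytope is nonempty, since any $\bm{x}$ in the input box with $\mathbf{h}$ between its sound bounds works. It is also bounded, since $\bm{x}$ lies in a box and $\mathbf{h}$ is sandwiched by affine functions of $\bm{x}$. Hence the LP has a finite optimum, and LP duality supplies optimal $\bm{\lambda}_i^*,\bm{\eta}_i^*\ge 0$ achieving that value.

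4. Setting $\widehat{\mathrm{LB}}_i$ and $\widehat{\mathrm{UB}}_i$ from $\bm{\lambda}_i^*,\bm{\eta}_i^*$ gives values equal to the primal optima, which satisfy the required sign conditions by step 2.

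5. I take $\mathrm{Z}_i=1$ in the first case and $\mathrm{Z}_i=0$ otherwise. The remaining Big-M inequality holds once $\mathrm{M}$ exceeds the finite magnitudes of $\widehat{\mathrm{LB}}_i$ and $\widehat{\mathrm{UB}}_i$, which is the meaning of "sufficiently large" in~\eqref{eq:milp}. The objective $\|\Delta\mathbf{W}\|_1+\|\Delta\mathrm{b}\|_1$ is unchanged.

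\textbf{Main obstacle.} I expect the delicate steps to be the containment of $\widehat{\mathcal{H}}^i$ in the concrete box, which depends on how the concrete bounds are produced, and making the choice of $\mathrm{M}$ uniform. For the latter I would restrict attention to a bounded set of candidate $\Delta\mathbf{W}$, for instance those whose objective is at most that of a fixed feasible solution of~\eqref{eq:concrete-repair}. On this set the LP values vary continuously and are therefore uniformly bounded, so a single $\mathrm{M}$ suffices.
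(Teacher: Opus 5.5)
Your proposal is correct and follows the paper's proof: soundness comes from the dual certificates bounding $\tilde f_{\mathrm{L}}$ over $\widehat{\mathcal{H}}^i \supseteq \mathcal{H}^i$ plus the Big-M case split, and the comparison comes from $\widehat{\mathcal{H}}^i \subseteq [\underline{\mathbf{h}}^i,\overline{\mathbf{h}}^i]$, so that every feasible point of~\eqref{eq:concrete-repair} extends to one of~\eqref{eq:milp}. You are somewhat more careful than the paper: you use only weak duality for soundness (the paper cites strong duality, but its chain only needs the $\le$ direction); you explicitly supply dual attainment, the choice of $\mathrm{Z}_i$ and a uniform $\mathrm{M}$; and you flag that the box containment requires the concrete bounds to be the concretization of the same linear bounds, which the paper simply asserts.
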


\begin{proof}
Consider \((\Delta \mathring{\mathbf{W}}, \Delta \mathring{\mathrm{b}}, \mathring{\mathbf{Z}}, \mathring{\bm{\lambda}}, \mathring{\bm{\eta}})\) as a feasible solution to the MILP problem~\eqref{eq:milp}.
We denote $\mathring{\mathbf{C}}=\begin{pmatrix}
  \mathbf{W} + \Delta \mathring{\mathbf{W}} \\
  \mathbf{0}^m
 \end{pmatrix}$.
Recalling the dual problem constructed in Eq.~\eqref{eq:dual-problem} and the strong duality~\cite{boyd2004convex}, we have:
\begin{equation*}
\begin{aligned}
 \max_{\substack{\mathbf{A}_i^{\top} \bm{\lambda}_i = -\mathring{\mathbf{C}} \\ \bm{\lambda}_i \geq 0}}  -\bm{\lambda}_i^{\top} \mathbf{D}_i
& = \min_{\mathbf{A}_i \mathbf{p} \leq \mathbf{D}_i} \mathring{\mathbf{C}}^{\top} \mathbf{p}
=  \min_{\mathbf{h} \in \widehat{\mathcal{H}}_{i} } (\mathbf{W}+\Delta \mathring{\mathbf{W}}) \mathbf{h}\\
 \min_{\substack{\mathbf{A}_i^{\top} \bm{\eta}_i = \mathring{\mathbf{C}} \\ \bm{\eta}_i \geq 0}}
\bm{\eta}_i^{\top}  \mathbf{D}_i
& =\max_{\mathbf{A}_i \mathbf{p} \leq \mathbf{D}_i} \mathring{\mathbf{C}}^{\top} \mathbf{p} =\max_ {\mathbf{h} \in \widehat{\mathcal{H}}_i}
(\mathbf{W}+\Delta \mathring{\mathbf{W}}) \mathbf{h}    \\
\end{aligned}
\end{equation*}
Thus, the model outputs over \(\mathcal{S}(\bm{x}^i)\) are bounded by $\widehat{\mathrm{LB}}_{i}$ and $\widehat{\mathrm{UB}}_{i}$ as:
\begin{equation*}
\begin{aligned}
&\begin{array}{cc}
\begin{aligned}
 \widehat{\mathrm{LB}}_{i} & = \mathrm{b}+\Delta \mathring{\mathrm{b}}+ (-\mathring{\bm{\lambda}}_i^{\top}  \mathbf{D}_i) \\ & \le \mathrm{b}+\Delta \mathring{\mathrm{b}}+  \max_{\bm{\lambda}_i} -\bm{\lambda}_i^{\top} \mathbf{D}_i\\
&= \mathrm{b}+\Delta \mathring{\mathrm{b}}+ \min_{\substack{\mathbf{h} \in \mathcal{\widehat{H}}_i}}(\mathbf{W}+\Delta \mathring{\mathbf{W}})  \mathbf{h} \\
&= \min_{\substack{\mathbf{h} \in \mathcal{\widehat{H}}_i}}  \tilde{f}_{\mathrm{L}} (\mathbf{h}) \\
&\textcolor{blue}{\le} \min_{\substack{\mathbf{h} \in \mathcal{H}_i}}  \tilde{f}_{\mathrm{L}} (\mathbf{h})= \min_{\substack{\bm{x} \in \mathcal{S}(\bm{x}^i)}}  \tilde{f} (\bm{x}) \\
 \end{aligned}   & \text{ s.t. }\; \begin{aligned}
    &\mathbf{A}_i^{\top} \bm{\lambda}_i = -\mathbf{C} \\
    &\bm{\lambda}_i \geq 0
\end{aligned}\\
\end{array}
\\
&\begin{array}{cc}
\begin{aligned}
\widehat{\mathrm{UB}}_{i} &=\mathrm{b}+\Delta \mathring{\mathrm{b}}+ (\bm{\mathring{\eta}}_i^{\top}  \mathbf{D}_i) \\
&\ge \mathrm{b}+\Delta \mathring{\mathrm{b}}+  \min_ {\bm{\eta}_i} \bm{\eta}_i^{\top}  \mathbf{D}_i  \\
&= \mathrm{b}+\Delta \mathring{\mathrm{b}}+  \max_{\substack{\mathbf{h} \in \mathcal{\widehat{H}}_i}}(\mathbf{W}+\Delta \mathring{\mathbf{W}}) \mathbf{h} \\
&=\max_{\substack{\mathbf{h} \in \mathcal{\widehat{H}}_i}}  \tilde{f}_{\mathrm{L}} (\mathbf{h}) \\
&\textcolor{blue}{\ge} \max_{\substack{\mathbf{h} \in \mathcal{H}_i}}  \tilde{f}_{\mathrm{L}} (\mathbf{h}) = \max_{\substack{\bm{x} \in \mathcal{S}(\bm{x}^i)}}  \tilde{f} (\bm{x})
\end{aligned}
 & \text{ s.t. }\; \begin{aligned}
    &\mathbf{A}_i^{\top} \bm{\eta}_i = \mathbf{C} \\
    &\bm{\eta}_i \geq 0
\end{aligned}\\
\end{array}
\end{aligned}
\end{equation*}
where the \textcolor{blue}{blue} inequality holds due to the soundness of the symbolic bounds, i.e., $\mathcal{H}_i \subseteq \mathcal{\widehat{H}}_i $.
Finally, by integrating \( \widehat{\mathrm{LB}}_{i} \geq  - \mathrm{M} \cdot (1 - \mathring{\mathrm{Z}}_i) \wedge
\widehat{\mathrm{UB}}_{i} <  \mathrm{M} \cdot \mathring{\mathrm{Z}}_i\), the model's outputs over \(\mathcal{S}(\bm{x}^i) \) are strictly confined to either positive or negative regimes, guaranteeing consistent classifications and thereby ensuring individual fairness.

We next prove that the optimal solution of problem~\eqref{eq:milp} is guaranteed to be no worse than that of problem~\eqref{eq:concrete-repair}.
The key point is that any feasible solution $(\Delta \dot{\mathbf{W}}, \Delta \dot{\mathrm{b}})$ of the problem~\eqref{eq:concrete-repair} can be extended to a feasible solution $(\Delta \mathbf{W}, \Delta \mathrm{b}, \mathbf{Z}, \bm{\lambda}, \bm{\eta})$ of the MILP problem~\eqref{eq:milp}.
This is because the symbolic bounds encoded in problem~\eqref{eq:milp} are tighter than the concrete bounds, which implies $\mathcal{\widehat{H}}_i \subseteq [\underline{\mathbf{h}}^i, \overline{\mathbf{h}}^i]$, thus we have:
\begin{equation*}
\begin{aligned}
     \min_{\mathbf{h} \in [\underline{\mathbf{h}}^i, \overline{\mathbf{h}}^i]} (\mathbf{W} + \Delta \dot{\mathbf{W}}) \mathbf{h}
    \le  \min_{\substack{\mathbf{h} \in \mathcal{\widehat{H}}_i}}(\mathbf{W}+\Delta \dot{\mathbf{W}})  \mathbf{h} \\
 \max_{\mathbf{h} \in [\underline{\mathbf{h}}^i, \overline{\mathbf{h}}^i]} (\mathbf{W} + \Delta \dot{\mathbf{W}}) \mathbf{h}
    \ge  \max_{\substack{\mathbf{h} \in \mathcal{\widehat{H}}_i}}(\mathbf{W}+\Delta \dot{\mathbf{W}})  \mathbf{h} \\
\end{aligned}
\end{equation*}
In other words, the fairness constraints in problem~\eqref{eq:milp} are relaxed compared to those in problem~\eqref{eq:concrete-repair}.
Therefore, the feasible region of problem~\eqref{eq:concrete-repair} is a subset of that of problem~\eqref{eq:milp}.
Consequently, problem~\eqref{eq:milp} minimizes the same objective over a (potentially) larger feasible region, and hence its optimal value is no greater than that of problem~\eqref{eq:concrete-repair}.
\qedhere
\end{proof}

The overall algorithm of \name is shown in Algorithm~\ref{algorithm:provable-repair}.
It first calibrates the feature extractor $f_{1:\mathrm{L}}$, then synthesizes symbolic bounds to formulate an MILP problem, which is solved by Gurobi~\cite{gurobi} to achieve provable repair.

\noindent \textbf{Remark:} While our illustration focused on binary classification, we can naturally extend \name to multi-class tasks by adapting the second step. 
For each input $\bm{x}^{i}$, instead of computing scalar lower/upper bounds for a single output neuron, we compute vectorized bounds ($\widehat{\mathrm{LB}}_i,\widehat{\mathrm{UB}}_i\in \mathbb{R}^{N}$) for all $N$ classes. 
We then vectorize the integer variables $\mathrm{Z}_i$ in a one-hot manner to represent the multi-class decision,  i.e., $\mathrm{Z}_i \in \{0,1\}^{N}, \sum_j \mathrm{Z}_{i,j}=1$.
These modifications enable us to soundly formulate the MILP for multi-class task.

\section{Evaluation}
In this section, we evaluate \name through extensive experiments and answer the following research questions.
\begin{itemize}
\item \textbf{RQ1:} How effective is \name in repairing NN unfairness?
\item \textbf{RQ2:} Can \name handle multiple protected attributes and more practical fairness definition?
\item \textbf{RQ3:} How effective is \name compared to baselines when evaluated by existing fairness testing frameworks?
\item \textbf{RQ4:} How do the two key steps of \name  contribute to the repair performance?
\end{itemize}

\subsection{Experimental Setup}

\subsubsection{Datasets and Models} We conduct experiments on four datasets including Adult (Census Income)~\cite{misc_adult_2}, German Credit~\cite{misc_statlog_(german_credit_data)_144}, Bank Marketing~\cite{misc_bank_marketing_222} and Compas~\cite{compas2016propublica}, 
which are commonly used in fairness testing~\cite{zhang2020white, zhang2021efficient, wang2024maft, zheng2022neuronfair} and repair~\cite{sun2022causality, quan2025dissecting}.
\ding{182} Adult is a dataset to predict whether a person earns more than \$50\,000. The protected attributes are gender, race, and age. \ding{183} Bank is a dataset for predicting if the bank client will subscribe a term deposit, with age being the sensitive attribute. \ding{184} German Credit is a dataset to assess creditworthiness based on personal and financial records. It has two protected attributes: gender and age. \ding{185} Compas is a dataset for assessing the likelihood of a criminal defendant reoffending. The protected attributes are gender, race, and age.

The biased NNs for repair are sourced from previous fairness verification works Fairify~\cite{biswas2023fairify} and FairQuant~\cite{kim2024fairquant}. These models consist of various number of layers and neurons.

\begin{algorithm}[t]
\caption{\name }
\label{algorithm:provable-repair}
\DontPrintSemicolon
\KwIn {Biased NN $f=f_{\mathrm{L}} \circ f_{1:\mathrm{L}}$, repair set \(\mathcal{D}_{\text{r}}\), a small set of training data \(\mathcal{D}_{\text{c}}\), maximum iteration $\mathrm{T}$}
\KwOut {Repaired model $\tilde{f}$}

$\tilde{f}_{1:\mathrm{L}} \gets \textsc{BiasedFeatureCalibration}(f, \mathcal{D}_{\text{r}}, \mathcal{D}_{\text{c}}, \mathrm{T})$

\mytcc{\small Repair $f_{1:\mathrm{L}}$ by Alg.~1}

\For{$i \gets 1$ \KwTo $|\mathcal{D}_{\text{r}}|$} {
        $\mathcal{S}(\bm{x}^i) \eqdef $ Set of all inputs similar to $\bm{x}^i$
        
        $\mathbf{\mathcal{\widehat{H}}}_i \gets \textsc{SymbolicBound}(\tilde{f}_{1:\mathrm{L}}, \mathcal{S}(\bm{x}^i))$

    }

Construct the MILP problem $\mathcal{M}$ according to Eq.~\eqref{eq:milp}

$\Delta \mathbf{W}, \Delta \mathrm{b}, \mathbf{Z}, \bm{\lambda}, \bm{\eta} \gets \textrm{Solve}(\mathcal{M})$

$\tilde{f}_{\mathrm{L}} \gets \textrm{Linear($\mathbf{W}+\Delta \mathbf{W}, b+\Delta \mathrm{b}$)}$


\Return $\tilde{f}_{\mathrm{L}} \circ \tilde{f}_{1:\mathrm{L}} $

\end{algorithm}

\subsubsection{Repair Setup}
The original datasets $\mathcal{D}_{\text{full}}$ are divided into the training set $\mathcal{D}_{\text{train}}$ and the test set $\mathcal{D}_{\text{test}}$. 
Then we randomly select 100 instances from $\mathcal{D}_{\text{train}}$ to construct the repair set $\mathcal{D}_{\text{r}}$.
We also provide a small set $\mathcal{D}_{\text{c}} \subset \mathcal{D}_{\text{train}}$ for all methods to preserve model's performance.
Table~\ref{tab:exp-config} shows the size of these sets and the original accuracy of the DNN for each dataset.

\begin{table}[h]
\caption{Details of datasets involved in repair and evaluation.}
\small
\renewcommand{\arraystretch}{0.9}
\setlength{\tabcolsep}{5.5pt}
\label{tab:exp-config}
\begin{tabular}{c|c|ccc|c|c}
\toprule
Dataset   & $\mathcal{D}_{\text{full}}$ & $\mathcal{D}_{\text{train}}$ &  $\mathcal{D}_{\text{r}} $ & $\mathcal{D}_{\text{c}}$ & $\mathcal{D}_{\text{test}}$ & Original Acc \\
\midrule
Adult & 48\,842 & 38\,438  & 100 & 500 &6\,784 & 85.24\% \\
Compas & 6\,172 & 4\,937 & 100 & 500 & 1\,235 & 73.85\%\\
German & 1\,000 & 700 & 100 & 100 & 300 & 72.67\%\\
Bank & 45\,211 & 36\,168  & 100 & 500 & 9\,043 & 89.53\%\\
\bottomrule
\end{tabular}
\end{table}

\begin{table*}[t]
\caption{Fairness improvement results for single-attribute repair. The best values are highlighted in \textbf{bold}.}
\label{table:exp-single}
\centering
\renewcommand{\arraystretch}{0.96}
\footnotesize
\begin{tabular}{cc|ccc|ccc|cc|c|l}
\toprule
\multirow{2}{*}{Metrics} & \multirow{2}{*}{Methods} & \multicolumn{3}{c|}{Compas} & \multicolumn{3}{c|}{Adult} & \multicolumn{2}{c|}{German} & Bank & \multirow{2}{*}{Avg (Relative Drop)} \\
 & & gender & race & age & gender & race & age & gender & age & age & \\
\midrule
 \multirow{5}{*}{CUR} & Original & 5.00\% & 9.00\% & 42.00\% & 3.00\% & 2.00\% & 14.00\% & 3.00\% & 4.00\% & 1.00\% & 9.22\% (-)\\
& {FLIP} & \textbf{0.00\%} & 6.00\% & 2.00\% & 2.00\% & 4.00\% & 8.00\% & 7.00\% & 1.00\% & 2.00\% & {3.56}\% (61.45\%$\downarrow$) \\
& {CARE} & \textbf{0.00\%} & 2.00\% & 17.00\% & 2.00\% & 2.00\% & 6.00\% & \textbf{0.00\%} & 1.00\% & \textbf{0.00\%} & {3.33}\% (63.86\%$\downarrow$) \\
& {GRFT} & \textbf{0.00\%} & 1.00\% & 16.00\% & \textbf{0.00\%} & 1.00\% & 7.00\% & \textbf{0.00\%} & \textbf{0.00\%} & \textbf{0.00\%} & {2.78}\% (69.88\%$\downarrow$) \\
& {Ours} & \textbf{0.00\%} & \textbf{0.00\%} & \textbf{0.00\%} & \textbf{0.00\%} & \textbf{0.00\%} & \textbf{0.00\%} & \textbf{0.00\%} & \textbf{0.00\%} & \textbf{0.00\%} & \textbf{0.00\%} \textbf{(100.00\%$\downarrow$)} \\
\cmidrule{1-12}
 \multirow{5}{*}{IDI-D} & Original & 6.25\% & 8.77\% & 44.86\% & 2.74\% & 4.15\% & 21.64\% & 2.60\% & 2.90\% & 1.12\% & 10.56\% (-)\\
& {FLIP} & 1.80\% & 8.25\% & 4.83\% & 2.95\% & 5.56\% & 8.14\% & 4.30\% & 0.40\% & 0.93\% & {4.13}\% (60.91\%$\downarrow$) \\
& {CARE} & 3.19\% & 5.75\% & 16.96\% & 1.89\% & 2.69\% & 5.60\% & 0.10\% & 3.30\% & 1.39\% & {4.54}\% (56.99\%$\downarrow$) \\
& {GRFT} & \textbf{0.00\%} & 2.37\% & 19.91\% & 0.98\% & \textbf{0.64\%} & 2.38\% & 0.20\% & 0.10\% & 0.56\% & {3.02}\% (71.44\%$\downarrow$) \\
& {Ours} & \textbf{0.00\%} & \textbf{0.00\%} & \textbf{0.49\%} & \textbf{0.79\%} & 1.48\% & \textbf{0.83\%} & \textbf{0.00\%} & \textbf{0.00\%} & \textbf{0.29\%} & \textbf{0.43\%} \textbf{(95.93\%$\downarrow$)} \\
\cmidrule{1-12}
 \multirow{5}{*}{IDI-S} & Original & 1.76\% & 2.73\% & 10.94\% & 0.65\% & 0.53\% & 3.65\% & 7.43\% & 8.77\% & 6.67\% & 4.79\% (-)\\
& {FLIP} & 0.63\% & 1.61\% & 3.08\% & 0.43\% & 0.33\% & 1.25\% & 2.99\% & 3.76\% & 3.06\% & {1.90}\% (60.27\%$\downarrow$) \\
& {CARE} & 0.75\% & 0.93\% & 11.36\% & 0.75\% & \textbf{0.28\%} & 3.30\% & 2.46\% & 7.14\% & 4.40\% & {3.49}\% (27.24\%$\downarrow$) \\
& {GRFT} & \textbf{0.00\%} & 0.50\% & 1.74\% & 0.59\% & 0.48\% & 2.51\% & 1.98\% & 2.31\% & 1.67\% & {1.31}\% (72.67\%$\downarrow$) \\
& {Ours} & \textbf{0.00\%} & \textbf{0.00\%} & \textbf{0.96\%} & \textbf{0.12\%} & 0.30\% & \textbf{0.26\%} & \textbf{0.00\%} & \textbf{0.00\%} & \textbf{1.31\%} & \textbf{0.33\%} \textbf{(93.16\%$\downarrow$)} \\
\bottomrule
\end{tabular}
\end{table*}

\subsubsection{Baselines}
We select the following NN individual fairness repair methods as baselines for comprehensive comparison:
\ding{182} \textbf{Flipping-based fine-tuning}. This is a basic method for flipping the protected attributes (e.g., changing the gender from female to male) in each input while maintaining the ground truth labels.
Through this learning paradigm, the biased model learns to make predictions that are insensitive to variations in protected attributes.
\ding{183} \textbf{CARE}~\cite{sun2022causality}. This method utilizes the causal model to pinpoint key neurons that are responsible for unfairness, followed by the PSO algorithm to generate neuron-level patches for repairing the model.
\ding{184} \textbf{GRFT}~\cite{quan2025dissecting}. As the latest state-of-the-art individual fairness repair framework, GRFT designs a loss function aimed at directly reducing differences in model outputs between IDI pairs.

Our method involves two hyperparameters: the number of iterations $\mathrm{T}$ in Step 1, and the learning rate. We fix them to 200 and 0.001 across all scenarios.

\subsubsection{Evaluation Metrics}
We evaluate the repaired models from two perspectives: improvement in fairness and preservation of original performance. We also record the time costs of all methods. Details of fairness metrics are described below:

\noindent \textbf{Fairness improvement}. 
Three metrics are considered for the fairness evaluation.
The Certified Unfair Rate (CUR) assesses the percentage of inputs $\bm{x}$ in the repair set $\mathcal{D}_{\text{r}}$ for which the repaired model $\tilde{f}$ still produces biased outputs over $\mathcal{S}(\bm{x})$:
\[
\text{CUR} = \frac{1}{|\mathcal{D}_{\text{r}}|} \sum_{i=1}^{|\mathcal{D}_{\text{r}}|}\mathbb{I}\left(\mathcal{U}(\tilde{f}, \bm{x}^i, \mathcal{S}(\bm{x}^i)) \neq \emptyset \right)
\]
where $\mathbb{I}$ is the indicator function and $\mathcal{U}$ is the set of all IDI instances in $\mathcal{S}(\bm{x})$ as defined in Eq.~\eqref{eq:U-set}.
For finite $\mathcal{S}(\bm{x}^i)$ (e.g., the protected attribute is gender and all non-sensitive attributes are fixed), we enumerate all possible items to check whether $\mathcal{U}(\tilde{f}, \bm{x}^i, \mathcal{S}(\bm{x}^i))$ is empty. 
For infinite cases, we employ the method from~\cite{DBLP:conf/iclr/TjengXT19} that certify the model by solving an MILP. 
This MILP precisely computes the model output ranges so that we can verify whether $\tilde{f}$ is fair over $\mathcal{S}(\bm{x}^i)$.
\emph{A provable repair must guarantee that $\tilde{f}$ achieves zero CUR.}

An effective repair should generalize to unseen inputs. 
We assess this using two metrics: 
IDI-D, the proportion of inputs in the entire dataset $\mathcal{D}_{\text{full}}$ that are IDIs; 
and IDI-S, the proportion of inputs identified as IDIs in a set $\mathcal{D}_{\text{sample}}$ consisting of 100\,000 instances randomly sampled from full input space, following prior work~\cite{zhang2020white, zhang2021efficient, zheng2022neuronfair}.
To compute IDI-D and IDI-S, we enumerate all similar inputs in $\mathcal{S}(\bm{x})$ for each instance $\bm{x} \in \mathcal{D}_{\text{full}} \text{ or } \mathcal{D}_{\text{sample}}$ when $\mathcal{S}(\bm{x})$ is finite; otherwise, we conduct uniform sampling over $\mathcal{S}(\bm{x})$ to obtain statistically significant estimates.
For both metrics, \emph{lower} values indicate better fairness improvement after repair.

In addition to the above metrics, we further incorporate four state-of-the-art fairness testing frameworks to evaluate \name and all baselines: ADF~\cite{zhang2020white}, EIDIG~\cite{zhang2021efficient}, NeuronFair~\cite{zheng2022neuronfair}, and GRFT~\cite{quan2025dissecting}.
We configured these frameworks following their technical papers and used them to generate IDIs for both the original and the repaired models. 
Fewer IDIs detected on the repaired models indicate better repair effectiveness.




\subsection{RQ1: How effective is \name in repairing NN unfairness?}
To answer this research question, we evaluate the performance of all repair methods. The results are summarized in Table~\ref{table:exp-single}.
As we can see, all methods lead to a notable reduction in the Certified Unfairness Rate (CUR), while \name consistently achieves provable fairness guarantees, reducing CUR to 0 across all settings. In contrast, existing baselines provide no theoretical guarantees and only reduce CUR by approximately 60--70\%.
In terms of generalization to unseen inputs, our method also outperforms all baselines. 
On both the full dataset $\mathcal{D}_{\text{full}}$ and the sampled set $\mathcal{D}_{\text{sample}}$, the models repaired by \name yield extremely low unfairness: the average IDI-D and IDI-S rates are reduced to just 0.43\% and 0.33\%, corresponding to relative reductions of 95.93\% and 93.16\%, respectively. 
The best-performing baseline, GRFT, only achieves 71.44\% and 72.67\% reductions in IDI-D and IDI-S.

\begin{table}[t]
\caption{Accuracy of repaired models, where colors denote accuracy loss: green ($\le$3\%), orange (3–5\%), red ($>$5\%).}
\label{table:exp-acc-single}
\small
\footnotesize
\centering
\begin{tabular}{cc|cccc}
\toprule
Dataset & Prot. Attr. &  FLIP & CARE & GRFT & Ours \\ 
\midrule
 Compas & gender & \cellcolor{green!20}72.39\% & \cellcolor{green!20}73.44\% & \cellcolor{green!20}72.71\% & \cellcolor{green!20}72.06\% \\
 Compas & race & \cellcolor{green!20}72.87\% & \cellcolor{green!20}71.42\% & \cellcolor{green!20}73.60\% & \cellcolor{green!20}72.47\% \\
 Compas & age & \cellcolor{orange!20}70.53\% & \cellcolor{red!20}65.43\% & \cellcolor{red!20}63.24\% & \cellcolor{orange!20}70.28\% \\
 Adult & gender & \cellcolor{green!20}83.65\% & \cellcolor{green!20}84.39\% & \cellcolor{orange!20}82.06\% & \cellcolor{green!20}82.50\% \\
 Adult & race & \cellcolor{green!20}83.58\% & \cellcolor{green!20}84.52\% & \cellcolor{orange!20}81.07\% & \cellcolor{orange!20}82.05\% \\
 Adult & age & \cellcolor{green!20}84.51\% & \cellcolor{orange!20}80.78\% & \cellcolor{orange!20}80.26\% & \cellcolor{green!20}83.67\% \\
 German & gender & \cellcolor{orange!20}68.67\% & \cellcolor{orange!20}69.33\% & \cellcolor{green!20}70.00\% & \cellcolor{green!20}69.67\% \\
 German & age & \cellcolor{green!20}70.33\% & \cellcolor{green!20}72.67\% & \cellcolor{green!20}70.00\% & \cellcolor{green!20}69.67\% \\
 Bank & age & \cellcolor{green!20}89.28\% & \cellcolor{green!20}88.80\% & \cellcolor{green!20}89.33\% & \cellcolor{green!20}89.35\% \\
\midrule  \multicolumn{2}{c|}{Average} & \cellcolor{green!20}77.31\% & \cellcolor{green!20}76.75\% & \cellcolor{orange!20}75.81\% & \cellcolor{green!20}76.86\% \\
 \multicolumn{2}{c|}{Accuracy Change} & \cellcolor{green!20}-\,1.81\% & \cellcolor{green!20}-\,2.37\% & \cellcolor{orange!20}-\,3.32\% & \cellcolor{green!20}-\,2.27\% \\
\bottomrule
\end{tabular}
\vspace{-1em}
\end{table}

Regarding performance preservation, we present the accuracy of the repaired models in Table~\ref{table:exp-acc-single}.
We find that all methods introduce some accuracy degradation. 
Among them, FLIP and \name exhibit more stable performance, with average accuracy losses of 1.81\% and 2.27\%, respectively. 
By comparison, CARE and GRFT can have a notable negative impact on model performance in certain scenarios. 
For instance, when repairing the Age attribute on the Compas dataset, they result in accuracy drops of nearly 10\%.

As for efficiency, both FLIP and GRFT introduce negligible time overheads, taking around 1 second per setting. 
\name also maintains a desirable runtime of 18 seconds on average. 
On the other hand, CARE incurs more time consumption (averaging 244 seconds) due to its reliance on the PSO algorithm, which often demands extensive iteration for convergence.

\begin{table*}[t]
\caption{Results of fairness improvements with multiple attributes and relaxed fairness properties (G, R and A denote Gender, Race and Age). Details of all relaxed fairness specifications are provided at the end of this section.}
\label{table:exp-mutli-eps}
\setlength{\tabcolsep}{4.5pt}
\centering
\footnotesize
\begin{tabular}{cc|llll|llll|llll}
\toprule
\multirow{2}{*}{Dataset} & \multirow{2}{*}{Attr.} & \multicolumn{4}{c|}{CUR} & \multicolumn{4}{c|}{IDI-D} &  \multicolumn{4}{c}{IDI-S} \\
& & Original & FLIP & GRFT & Ours & Original & FLIP & GRFT & Ours& Original & FLIP & GRFT & Ours \\
\midrule
Compas & G-R& 19.00\% & 7.00\% & 3.00\% & \textbf{0.00\%} & 16.92\% & 7.63\% & 6.59\% & \textbf{0.00\%} & 4.49\% & 1.94\% & 1.30\% & \textbf{0.00\%}  \\
Compas & G-A& 49.00\% & 13.00\% & 20.00\% & \textbf{0.00\%} & 51.65\% & 7.57\% & 25.84\% & \textbf{1.62\%} & 12.71\% & 3.57\% & 1.57\% & \textbf{1.33\%}  \\
Compas & R-A& 55.00\% & 13.00\% & 13.00\% & \textbf{0.00\%} & 55.54\% & 7.18\% & 15.04\% & \textbf{3.68\%} & 13.56\% & 2.91\% & \textbf{1.91\%} & 2.56\%  \\
Adult & G-R& 7.00\% & 1.00\% & 2.00\% & \textbf{0.00\%} & 7.59\% & 6.80\% & 3.21\% & \textbf{1.78\%} & 1.18\% & 0.73\% & 1.11\% & \textbf{0.39\%}  \\
Adult & G-A& 16.00\% & 6.00\% & 6.00\% & \textbf{0.00\%} & 23.01\% & 10.51\% & 7.60\% & \textbf{3.13\%} & 4.31\% & 1.16\% & 3.74\% & \textbf{0.78\%}  \\
Adult & R-A& 17.00\% & 8.00\% & 7.00\% & \textbf{0.00\%} & 23.01\% & 12.62\% & 7.51\% & \textbf{2.71\%} & 4.20\% & 1.36\% & 3.52\% & \textbf{0.78\%}  \\
German & G-A& 4.00\% & 13.00\% & \textbf{0.00\%} & \textbf{0.00\%} & 6.10\% & 8.90\% & 0.40\% & \textbf{0.00\%} & 16.11\% & 9.91\% & 4.39\% & \textbf{0.00\%}  \\
 \midrule 
 \multicolumn{2}{c|}{Average}    & 23.86\%   & 8.71\%   & 7.29\%   & \textbf{0.00\%}  & 26.26\%   & 8.74\%   & 9.46\%   & \textbf{1.85\%}  & 8.08\%   & 3.08\%   & 2.51\%   & \textbf{0.83\%}\\
 \multicolumn{2}{c|}{Relative Drop}  &  -  & 63.47\%$\downarrow$  & 69.46\%$\downarrow$  & \textbf{100.00\%}$\downarrow$&  -  & 66.70\%$\downarrow$  & 63.99\%$\downarrow$  & \textbf{92.97\%}$\downarrow$&  -  & 61.88\%$\downarrow$  & 69.00\%$\downarrow$  & \textbf{89.67\%}$\downarrow$\\
\midrule
\midrule
Adult & G-$\phi_{A}$& 3.00\% & 6.00\% & 1.00\% & \textbf{0.00\%} & 3.26\% & 4.97\% & 1.43\% & \textbf{1.22\%} & 0.73\% & 0.52\% & 0.67\% & \textbf{0.18\%}  \\
Adult & A-$\phi_{A}$& 14.00\% & 6.00\% & 5.00\% & \textbf{0.00\%} & 21.98\% & 6.67\% & 6.68\% & \textbf{0.61\%} & 3.73\% & 0.90\% & 3.07\% & \textbf{0.32\%}  \\
Adult & R-$\phi_{A}$& 2.00\% & 4.00\% & 1.00\% & \textbf{0.00\%} & 4.70\% & 6.42\% & 1.78\% & \textbf{1.20\%} & 0.62\% & 0.39\% & 0.60\% & \textbf{0.32\%}  \\
Bank & A-$\phi_{B}$& 2.00\% & 5.00\% & 2.00\% & \textbf{0.00\%} & 3.56\% & 3.72\% & 2.04\% & \textbf{0.11\%} & 23.75\% & 18.80\% & 6.82\% & \textbf{2.75\%}  \\
German & G-$\phi_{G}$& 3.00\% & 8.00\% & \textbf{0.00\%} & \textbf{0.00\%} & 3.10\% & 5.30\% & 0.20\% & \textbf{0.00\%} & 7.98\% & 4.01\% & 2.07\% & \textbf{0.00\%}  \\
German & A-$\phi_{G}$& 4.00\% & \textbf{0.00\%} & \textbf{0.00\%} & \textbf{0.00\%} & 3.50\% & 0.10\% & 0.10\% & \textbf{0.00\%} & 9.35\% & 1.99\% & 2.39\% & \textbf{0.00\%}  \\
 \midrule 
 \multicolumn{2}{c|}{Average}    & 4.67\%   & 4.83\%   & 1.50\%   & \textbf{0.00\%}  & 6.68\%   & 4.53\%   & 2.04\%   & \textbf{0.52\%}  & 7.69\%   & 4.44\%   & 2.60\%   & \textbf{0.59\%}\\
 \multicolumn{2}{c|}{Relative Drop}  &  -  & -3.57\%$\downarrow$  & 67.86\%$\downarrow$  & \textbf{100.00\%}$\downarrow$&  -  & 32.22\%$\downarrow$  & 69.49\%$\downarrow$  & \textbf{92.17\%}$\downarrow$&  -  & 42.34\%$\downarrow$  & 66.18\%$\downarrow$  & \textbf{92.27\%}$\downarrow$\\
\bottomrule
\end{tabular}
\vspace{-1em}
\end{table*}

\subsection{RQ2: Can \name handle multiple protected attributes and more practical fairness definition?}
To answer this question, we conduct experiments on repair with multiple sensitive attributes simultaneously. 
CARE is excluded from this evaluation since its current implementation does not support this setting.

As shown in the upper panel of Table~\ref{table:exp-mutli-eps}, the original models exhibit higher unfairness prior to repair. 
This is expected since multiple protected attributes can combine to form a larger set of similar individuals, making it naturally more challenging for the model to satisfy fairness constraints.
Despite this increased challenge, \name consistently achieves provable repair by completely eliminating unfair behaviors on the repair set (i.e., CUR = 0).
Models repaired by FLIP and GRFT, however, often continue to exhibit discrimination, with CUR reductions under 70\%.
Our method also demonstrates remarkable generalization compared to existing approaches. 
It attains significantly lower IDI-D and IDI-S scores across all settings, with average mitigations of 92.97\% and 89.67\%, respectively, except for the ``Compas + Race-Age'' scenario, where its performance is only 0.6\% below the best method.

To further examine the effectiveness of all methods under more practical fairness definitions, we conduct a set of experiments on repairing models with relaxed fairness properties.
Following Fairify~\cite{biswas2023fairify}, we define small perturbations on one non-sensitive attribute so that two individuals are considered similar even if they are not exactly equal on this attribute. 
These individuals are still expected to be assigned to the same class.
For instance, the specification $\phi_{A}$ indicates that for the Adult dataset, individuals are considered similar as long as the value of ``hours-per-week'' attribute differ by at most 1, regardless of their values on sensitive attributes.
The complete list of relaxed fairness specifications is as follows:
\begin{itemize}[left=0pt]
    \item $\phi_{A}$: Two individuals are considered similar if their ``hours-per-week'' attribute differs by at most 1.
    \item $\phi_{B}$: Two individuals are regarded as similar when their ``duration'' attribute varies by no more than 1.
    \item $\phi_{G}$: This specification defines similarity between individuals whose ``credit-amount'' attribute lies within a fixed interval of size 100.
\end{itemize}

The results are shown in the lower panel of Table~\ref{table:exp-mutli-eps}.
We observe that FLIP suffers a significant performance drop compared to the previous repair setting; in some cases, the unfairness of the repaired model even increases.
This may be because it does not account for similarity between individuals whose non-sensitive attributes differ slightly and thus fails to enforce fairness under relaxed definitions.
GRFT maintains similar performance as before, achieving around 70\% reduction in unfairness metrics.
In contrast, \name consistently outperforms existing approaches across all scenarios, achieving complete unfairness elimination on the repair set, along with over 90\% reduction in both IDI-D and IDI-S.

In terms of accuracy, all methods exhibit slightly increased performance degradation compared to the single-attribute repair setting. Both FLIP and \name maintain stable performance, with average accuracy losses of 2.12\% and 2.90\%, respectively, which are significantly better than GRFT (5.03\%).

\begin{table*}[t]
\caption{Results of fairness evaluation with various testing frameworks.
Colors indicate repair effectiveness: green for strong unfairness mitigation (ratios near 0), orange for minor change (ratios near 1), and red for increase (ratios above 1).}
\label{table:testing}
\setlength{\tabcolsep}{5.1pt}
\renewcommand{\arraystretch}{0.95}
\footnotesize
\begin{tabular}{lc|cccc|cccc|cccc|cccc}
\toprule
\multirow{2}{*}{Dataset} & \multirow{2}{*}{Attr.} & \multicolumn{4}{c|}{FLIP} & \multicolumn{4}{c|}{CARE} & \multicolumn{4}{c|}{GRFT} & \multicolumn{4}{c}{\name} \\
 &  & ADF & EIDIG & NF & GRFT & ADF & EIDIG & NF & GRFT & ADF & EIDIG & NF & GRFT & ADF & EIDIG & NF & GRFT \\
\midrule
\multirow{6}{*}{Compas} & G & \cellcolor{green!25}0.28 & \cellcolor{green!25}0.24 & \cellcolor{green!25}0.36 & \cellcolor{green!25}0.33 & \cellcolor{green!25}0.28 & \cellcolor{green!25}0.36 & \cellcolor{green!20}0.45 & \cellcolor{green!20}0.42 & \cellcolor{green!30}\textbf{0.00} & \cellcolor{green!30}\textbf{0.00} & \cellcolor{green!30}\textbf{0.00} & \cellcolor{green!30}\textbf{0.00} & \cellcolor{green!30}\textbf{0.00} & \cellcolor{green!30}\textbf{0.00} & \cellcolor{green!30}\textbf{0.00} & \cellcolor{green!30}\textbf{0.00} \\
 & R & \cellcolor{green!20}0.53 & \cellcolor{green!15}0.63 & \cellcolor{green!20}0.59 & \cellcolor{green!25}0.37 & \cellcolor{green!20}0.42 & \cellcolor{green!20}0.45 & \cellcolor{green!25}0.35 & \cellcolor{green!25}0.22 & \cellcolor{green!30}0.17 & \cellcolor{green!30}0.18 & \cellcolor{green!30}0.18 & \cellcolor{green!30}0.11 & \cellcolor{green!30}\textbf{0.00} & \cellcolor{green!30}\textbf{0.00} & \cellcolor{green!30}\textbf{0.00} & \cellcolor{green!30}\textbf{0.00} \\
 & A & \cellcolor{green!25}0.28 & \cellcolor{green!25}0.34 & \cellcolor{green!25}0.28 & \cellcolor{green!25}0.25 & \cellcolor{green!10}0.92 & \cellcolor{green!10}0.95 & \cellcolor{orange!10}1.04 & \cellcolor{green!10}0.91 & \cellcolor{green!25}0.21 & \cellcolor{green!30}0.20 & \cellcolor{green!30}0.16 & \cellcolor{green!30}0.14 & \cellcolor{green!30}\textbf{0.12} & \cellcolor{green!30}\textbf{0.13} & \cellcolor{green!30}\textbf{0.09} & \cellcolor{green!30}\textbf{0.08} \\
 & G-R & \cellcolor{green!25}0.29 & \cellcolor{green!25}0.32 & \cellcolor{green!25}0.29 & \cellcolor{green!30}0.17 & \cellcolor{green!25}0.32 & \cellcolor{green!25}0.34 & \cellcolor{green!25}0.29 & \cellcolor{green!30}0.17 & \cellcolor{green!20}0.44 & \cellcolor{green!20}0.44 & \cellcolor{green!25}0.29 & \cellcolor{green!30}0.17 & \cellcolor{green!30}\textbf{0.00} & \cellcolor{green!30}\textbf{0.00} & \cellcolor{green!30}\textbf{0.00} & \cellcolor{green!30}\textbf{0.00} \\
 & G-A & \cellcolor{green!15}0.70 & \cellcolor{green!15}0.72 & \cellcolor{green!15}0.64 & \cellcolor{green!20}0.51 & \cellcolor{green!15}0.79 & \cellcolor{green!15}0.72 & \cellcolor{green!15}0.72 & \cellcolor{green!20}0.57 & \cellcolor{green!25}0.21 & \cellcolor{green!25}0.23 & \cellcolor{green!30}0.13 & \cellcolor{green!30}0.10 & \cellcolor{green!30}\textbf{0.15} & \cellcolor{green!30}\textbf{0.15} & \cellcolor{green!30}\textbf{0.10} & \cellcolor{green!30}\textbf{0.08} \\
 & R-A & \cellcolor{green!25}0.30 & \cellcolor{green!25}0.38 & \cellcolor{green!25}0.38 & \cellcolor{green!25}0.31 & \cellcolor{green!20}0.48 & \cellcolor{green!20}0.47 & \cellcolor{green!25}0.36 & \cellcolor{green!25}0.29 & \cellcolor{green!25}\textbf{0.22} & \cellcolor{green!25}\textbf{0.25} & \cellcolor{green!30}\textbf{0.14} & \cellcolor{green!30}\textbf{0.12} & \cellcolor{green!25}\textbf{0.22} & \cellcolor{green!25}0.31 & \cellcolor{green!30}0.19 & \cellcolor{green!30}0.15 \\
\cmidrule{1-18}
\multirow{6}{*}{Adult} & G & \cellcolor{orange!10}1.07 & \cellcolor{green!15}0.68 & \cellcolor{green!20}0.41 & \cellcolor{green!15}0.79 & \cellcolor{green!30}\textbf{0.05} & \cellcolor{green!30}\textbf{0.04} & \cellcolor{green!30}0.04 & \cellcolor{green!30}\textbf{0.19} & \cellcolor{green!20}0.43 & \cellcolor{green!25}0.26 & \cellcolor{green!20}0.53 & \cellcolor{green!15}0.78 & \cellcolor{green!30}0.06 & \cellcolor{green!25}0.28 & \cellcolor{green!30}\textbf{0.02} & \cellcolor{green!25}0.31 \\
 & R & \cellcolor{green!10}0.85 & \cellcolor{green!15}0.80 & \cellcolor{green!15}0.77 & \cellcolor{orange!20}1.11 & \cellcolor{green!25}0.36 & \cellcolor{green!20}0.41 & \cellcolor{green!25}0.23 & \cellcolor{green!10}0.93 & \cellcolor{green!15}0.73 & \cellcolor{green!15}0.65 & \cellcolor{green!15}0.80 & \cellcolor{green!10}0.96 & \cellcolor{green!30}\textbf{0.18} & \cellcolor{green!25}\textbf{0.36} & \cellcolor{green!30}\textbf{0.09} & \cellcolor{green!15}\textbf{0.74} \\
 & A & \cellcolor{green!25}0.26 & \cellcolor{green!30}0.18 & \cellcolor{green!25}0.35 & \cellcolor{green!20}0.56 & \cellcolor{orange!20}1.15 & \cellcolor{orange!20}1.12 & \cellcolor{green!20}0.54 & \cellcolor{green!20}0.43 & \cellcolor{orange!10}1.08 & \cellcolor{green!10}0.96 & \cellcolor{green!10}0.86 & \cellcolor{green!15}0.78 & \cellcolor{green!30}\textbf{0.00} & \cellcolor{green!30}\textbf{0.01} & \cellcolor{green!30}\textbf{0.03} & \cellcolor{green!30}\textbf{0.08} \\
 & G-R & \cellcolor{green!15}0.61 & \cellcolor{green!10}0.92 & \cellcolor{green!20}0.47 & \cellcolor{green!15}0.64 & \cellcolor{green!15}0.70 & \cellcolor{green!15}0.62 & \cellcolor{green!25}0.32 & \cellcolor{green!15}0.64 & \cellcolor{green!15}0.69 & \cellcolor{green!15}0.63 & \cellcolor{green!15}0.71 & \cellcolor{green!10}0.93 & \cellcolor{green!30}\textbf{0.09} & \cellcolor{green!20}\textbf{0.42} & \cellcolor{green!30}\textbf{0.04} & \cellcolor{green!20}\textbf{0.44} \\
 & G-A & \cellcolor{green!25}0.35 & \cellcolor{green!25}0.29 & \cellcolor{green!20}0.49 & \cellcolor{green!15}0.61 & \cellcolor{orange!10}1.05 & \cellcolor{green!10}0.97 & \cellcolor{green!10}0.84 & \cellcolor{green!10}0.95 & \cellcolor{orange!10}1.03 & \cellcolor{orange!10}1.01 & \cellcolor{green!10}0.97 & \cellcolor{green!15}0.67 & \cellcolor{green!30}\textbf{0.05} & \cellcolor{green!30}\textbf{0.06} & \cellcolor{green!30}\textbf{0.16} & \cellcolor{green!25}\textbf{0.21} \\
 & R-A & \cellcolor{green!25}0.40 & \cellcolor{green!25}0.34 & \cellcolor{green!20}0.47 & \cellcolor{green!20}0.47 & \cellcolor{orange!20}1.14 & \cellcolor{orange!20}1.11 & \cellcolor{green!10}0.80 & \cellcolor{red!30}1.42 & \cellcolor{green!10}0.87 & \cellcolor{green!10}0.86 & \cellcolor{green!10}0.90 & \cellcolor{green!20}0.52 & \cellcolor{green!30}\textbf{0.03} & \cellcolor{green!30}\textbf{0.08} & \cellcolor{green!30}\textbf{0.12} & \cellcolor{green!25}\textbf{0.28} \\
\cmidrule{1-18}
\multirow{3}{*}{German} & G & \cellcolor{orange!10}1.07 & \cellcolor{green!10}0.94 & \cellcolor{red!30}5.79 & \cellcolor{red!30}1.43 & \cellcolor{orange!20}1.11 & \cellcolor{orange!10}1.04 & \cellcolor{green!30}0.13 & \cellcolor{green!15}0.67 & \cellcolor{orange!30}1.20 & \cellcolor{orange!10}1.01 & \cellcolor{green!30}0.05 & \cellcolor{green!15}0.63 & \cellcolor{green!30}\textbf{0.00} & \cellcolor{green!30}\textbf{0.00} & \cellcolor{green!30}\textbf{0.00} & \cellcolor{green!30}\textbf{0.00} \\
 & A & \cellcolor{orange!20}1.20 & \cellcolor{orange!20}1.13 & \cellcolor{green!30}0.02 & \cellcolor{green!15}0.70 & \cellcolor{orange!10}1.08 & \cellcolor{orange!20}1.15 & \cellcolor{red!30}1.40 & \cellcolor{green!15}0.63 & \cellcolor{orange!20}1.16 & \cellcolor{orange!10}1.04 & \cellcolor{green!30}0.02 & \cellcolor{green!15}0.68 & \cellcolor{green!30}\textbf{0.00} & \cellcolor{green!30}\textbf{0.00} & \cellcolor{green!30}\textbf{0.00} & \cellcolor{green!30}\textbf{0.00} \\
 & G-A & \cellcolor{orange!10}1.02 & \cellcolor{orange!10}1.03 & \cellcolor{red!30}2.79 & \cellcolor{orange!30}1.30 & \cellcolor{green!10}0.88 & \cellcolor{green!10}0.85 & \cellcolor{green!25}0.23 & \cellcolor{green!10}0.83 & \cellcolor{green!10}0.86 & \cellcolor{green!10}0.83 & \cellcolor{green!30}0.02 & \cellcolor{green!15}0.64 & \cellcolor{green!30}\textbf{0.00} & \cellcolor{green!30}\textbf{0.00} & \cellcolor{green!30}\textbf{0.00} & \cellcolor{green!30}\textbf{0.00} \\
\cmidrule{1-18}
\multirow{1}{*}{Bank} & A & \cellcolor{green!15}0.62 & \cellcolor{green!20}0.58 & \cellcolor{green!30}0.10 & \cellcolor{green!15}0.71 & \cellcolor{orange!30}1.27 & \cellcolor{red!30}1.52 & \cellcolor{green!10}0.96 & \cellcolor{green!15}0.79 & \cellcolor{green!15}0.74 & \cellcolor{green!15}0.80 & \cellcolor{green!25}0.23 & \cellcolor{green!10}0.82 & \cellcolor{green!30}\textbf{0.04} & \cellcolor{green!30}\textbf{0.16} & \cellcolor{green!30}\textbf{0.03} & \cellcolor{green!25}\textbf{0.22} \\
\midrule
\multicolumn{2}{c|}{Average} & \cellcolor{green!15}0.61 & \cellcolor{green!20}0.60 & \cellcolor{green!10}0.89 & \cellcolor{green!15}0.64 & \cellcolor{green!15}0.75 & \cellcolor{green!15}0.76 & \cellcolor{green!20}0.54 & \cellcolor{green!15}0.63 & \cellcolor{green!15}0.63 & \cellcolor{green!20}0.58 & \cellcolor{green!25}0.37 & \cellcolor{green!20}0.50 & \cellcolor{green!30}\textbf{0.06} & \cellcolor{green!30}\textbf{0.12} & \cellcolor{green!30}\textbf{0.05} & \cellcolor{green!30}\textbf{0.16} \\
\bottomrule
\end{tabular}
\vspace{-1em}
\end{table*}

\subsection{RQ3: How does \name perform under various testing frameworks?}

To answer this research question, we employ four state-of-the-art NN fairness testing frameworks to evaluate \name and all baselines. 
Specifically, we follow the original configurations of these frameworks and used them to generate IDIs for both the original and the repaired model. 
We denote the number of IDIs detected on the original model as $\text{N}_{\text{ori}}$ and those detected on the repaired model as $\text{N}_{\text{repair}}$.
To measure repair effectiveness under each testing framework, we report the ratio $\text{N}_{\text{repair}} / \text{N}_{\text{ori}}$, where a smaller ratio indicates that fewer IDIs are detected and thus better fairness improvement. 

The results are presented in Table~\ref{table:testing}. 
Overall, all methods reduce model bias, broadly consistent with our earlier findings.
However, we find that these testing frameworks reveal model defects more precisely and effectively than evaluation based on uniform sampling, especially in certain settings. 
For instance, all baseline methods especially FLIP and CARE perform poorly on the German dataset, and in some cases the repaired models even exhibit more IDIs than the original ones (ratios greater than 1). 
In contrast, our method shows consistently better performance, effectively reducing bias across all settings. 
In particular, \name achieves average ratios of 0.06, 0.12, 0.05, and 0.16 under the four testing tools, corresponding to 94\%, 88\%, 95\%, and 84\% reductions in IDIs, respectively.
These results are in close agreement with our earlier sampling-based generalization evaluation  (i.e., nearly 90\% reduction).
By comparison, the best-performing baseline, GRFT, reduces IDIs by no more than 70\%, further confirming the superior generalization of our method against all baselines.

\subsection{RQ4: How do the two key steps of \name impact repair?}
\label{sec:exp-rq4}
To answer this question, we first examine the impact of the first step, which performs a progressive bounds tightening process.
Specifically, we track the dynamics of the two losses minimized during this step: $\mathcal{L}_\text{bce}$, the BCE loss computed on a small subset of training data $\mathcal{D}{\text{c}}$, and $\mathcal{L}_\text{fair}$, a fairness-aware loss measuring the relative $\ell_1$-norm reduction between the current and original bounds.
Due to space limitations, we present only a subset of the results in Figure~\ref{fig:exp-rq4-s1}; the full version is shown in Appendix.
We observe that $\mathcal{L}_{\text{fair}}$ steadily decreases as epochs progress, eventually reaching a relatively low level. 
This indicates that the concrete bounds after repair are significantly tightened compared to before, demonstrating Step 1 successfully calibrates and reduces model bias. 
Meanwhile, $\mathcal{L}_{\text{bce}}$ remains stable, showing that model's knowledge is largely preserved. 
These results highlight that this step effectively balances fairness enhancement and accuracy retention, with $\mathcal{L}_{\text{fair}}$ playing a pivotal role in calibrating the model without sacrificing its fidelity.

\begin{figure}[!t]
\centering
\includegraphics[width=1.0\linewidth]{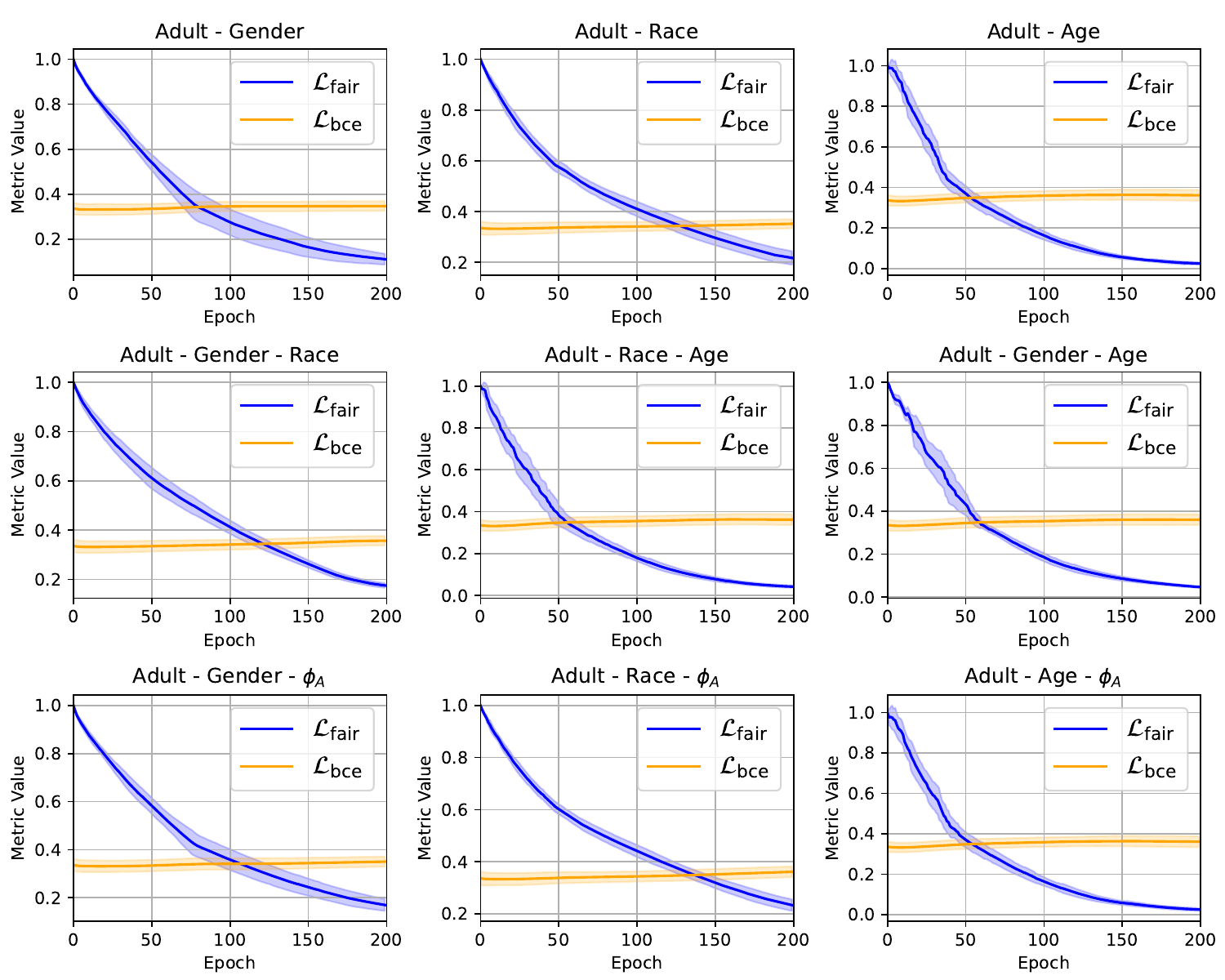}
\caption{Loss dynamics during the bounds tightening in Step 1 of \name. Shaded areas denote the variance across 10 runs.}
\label{fig:exp-rq4-s1}
\vspace{-0.5em}
\end{figure}

We further analyze the contribution of the second step in \name.
Figure~\ref{fig:exp-rq4-s3} compares \name with the naive repair method that relies on loose concrete bounds.
As shown in the top panel, models repaired by \name consistently achieve higher accuracy, suggesting that the naive method leads to excessive modifications of the final layer as it requires fairness constraints are satisfied across the box formed by concrete bounds.
This observation is further supported by the bottom panel, where we report the optimal values of the respective MILP formulations solved by Gurobi.
\name consistently yields smaller solutions, indicating that the tighter symbolic bounds enable less adjustment to the model.

\begin{figure}[t]
\centering
\includegraphics[width=1.0\linewidth]{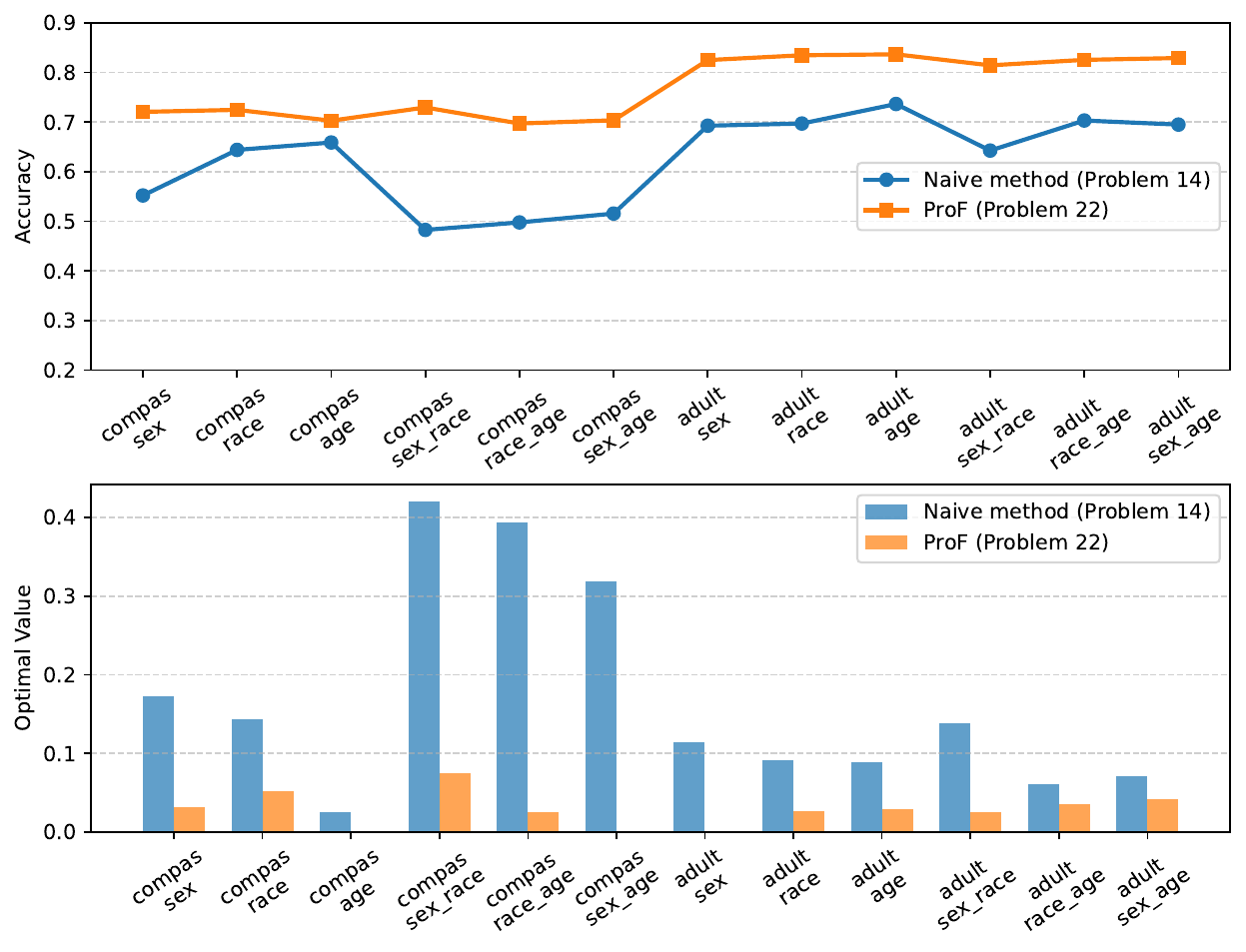}
\caption{Comparison between the naive repair method and \name. Top: Accuracy of models repaired by the naive method and \name. 
Bottom: Optimal values of the MILP formulated by the naive method (Problem~\ref{eq:concrete-repair}) and \name (Problem~\ref{eq:milp}). }
\label{fig:exp-rq4-s3}
\vspace{-0.9em}
\end{figure}

\section{Discussion}

\textbf{Scalability to more complex models:}
\name can scale to more complex models since it only requires the final layer to be linear, which holds for most DNNs. 
A potential limitation when scaling to these models is computational cost, since MILP is theoretically NP-hard and its complexity grows exponentially with the number of integer variables.
However, this number is irrelevant to the model size in our formulation, which could mitigate the exponential explosion issue. 
The remaining variables and constraints form a standard linear program solvable in polynomial time.
Nevertheless, we agree that the overall MILP complexity cannot be guaranteed to be polynomial due to its inherent NP-hardness. 

\textbf{Relation to robustness repair:} 
\name achieves provable fairness repair through two carefully designed steps. 
The first step mitigates bias by tightening concrete bounds to promote feature consistency among similar individuals, regardless of the ground truth label. 
This method is tailored for fairness and cannot apply to robustness repair directly, which requires all perturbed inputs to be classified correctly. 
In the second step, we formulate a unified constraint solving problem. 
The key challenge here is the presence of non-linear terms in the MILP, which we address by introducing the dual theorem. 
This technique can also be adapted to address a similar problem in robustness repair.
Beyond this issue, fairness repair exhibits another distinctive challenge: it involves disjunctive constraints that cannot be handled by existing solvers. 
To address this, we carefully design and introduce integer variables to transform disjunctive constraints into linear form.
Overall, while some techniques in the second step can be adapted to robustness repair, the first step is fairness-specific, which makes \name currently unique to fairness. 

\section{Related Work}
\textbf{NN Fairness.}
Fairness for DNNs is commonly categorized into \textit{group fairness} and \textit{individual fairness}. Group fairness considers whether different demographic groups receive statistically similar outcomes~\cite{feldman2015certifying, hardt2016equality}. 
Despite its simplicity in metric design, it only captures statistical parity and may result in unfairness at the individual level. 
In contrast, individual fairness~\cite{zemel2013learning, dwork2012fairness} stipulates that similar individuals should be treated similarly by the model, typically formalized as the constraint that predictions remain consistent when only the sensitive attributes are changed. 
It enables finer, instance-level analysis and broader use of testing and verification techniques.

Various methods have been developed to test and verify individual fairness in DNNs.
ADF~\cite{zhang2020white} and EIDIG~\cite{zhang2021efficient} search for IDIs near the decision boundary by leveraging the loss function gradient. 
NeuronFair~\cite{zheng2022neuronfair} further optimizes test generation by computing gradients only for biased neurons identified via sensitivity analysis. 
More recently,~\cite{monjezi2025fairness} introduces extreme value theory to fairness testing.
It models the worst case counterfactual bias and develops a randomized test-case generation algorithm to collect tail samples with statistical guarantees.
Beyond testing, several works have been proposed to formally verify NN fairness.
For example,~\cite{sun2021probabilistic} learns Markov Chains from a given model to formally verify group fairness with probabilistic guarantees.
On the other hand, DeepGemini~\cite{xie2023deepgemini} encodes individual fairness constraints into SMT formulas but is limited in scalability. 
Fairify~\cite{biswas2023fairify} improves verification efficiency by decomposing the problem and pruning neurons. 
FairQuant~\cite{kim2024fairquant} further enhances scalability and precision via abstraction-refinement, and additionally quantifies the proportion of inputs that are certifiably fair or unfair.

Regarding unfairness mitigation, several works~\cite{chen2024isolation, quan2025dissecting, sun2022causality, li2023fairer} use heuristic algorithms to correct model bias but lack formal guarantees. 
Shifty~\cite{giguere2022fairness} presents a fairness training method, offering high-confidence fairness guarantees under demographic shift.
NeuFair~\cite{dasu2024neufair} and AutoRIC~\cite{sun2025autoric} are
recently proposed methods that target group fairness repair: the former leverages simulated annealing to optimize repair solutions with statistical guarantees, while the latter uses constraint solving. 
Unlike these methods focusing on group
fairness, this work aims to provide \textit{deterministic}, \textit{provable} guarantees for \textit{individual fairness repair}.

\textbf{NN Repair.}
There exists a broader line of work on general NN repair that targets other correctness properties.
Among them, non-provable methods~\cite{sun2022causality, usman2021nn, henriksen2022repairing, sohn2023arachne, chen2024interpretability, ma2024vere} typically follow a two-step pipeline: first localizing faulty neurons or parameters, then applying heuristic techniques to adjust them. 
These methods often rely heavily on sufficient data and lack rigorous guarantees.
To provide guarantees, several recent works~\cite{sotoudeh2021provable, reassure, tao2023architecture} leverage constraint solvers to calculate parameter changes ensuring property satisfaction.
However, these methods are not specifically designed for fairness and thus cannot be directly applied to fairness repair.

\section{Conclusion}
We present \name, a provable fairness repair framework for DNNs. 
It leverages interval bound propagation to calculate concrete bounds that soundly capture the model behavior in feature space, and iteratively tightens these bounds to calibrate the biased model.
\name further synthesizes symbolic bounds to formulate a precise constraint-solving problem and introduces the duality theorem to eliminate non-linear operations to construct an MILP that provides provable guarantees for repair.
Extensive experiments demonstrate that \name significantly outperforms the state-of-the-art in terms of both effectiveness and generalization.
Moreover, \name can handle multiple sensitive attributes and relaxed fairness definitions.

\section*{Acknowledgments}
This work is supported by the NSFC (No. U21B2001, No. 62402150) and the Key R\&D Program of Zhejiang
under Grant (No. 2025C01083).

\clearpage
\bibliographystyle{IEEEtran}
\bibliography{IEEEabrv, wx}

\clearpage
\appendices
\begin{figure*}[!b]
\centering
\includegraphics[width=1.0\textwidth]{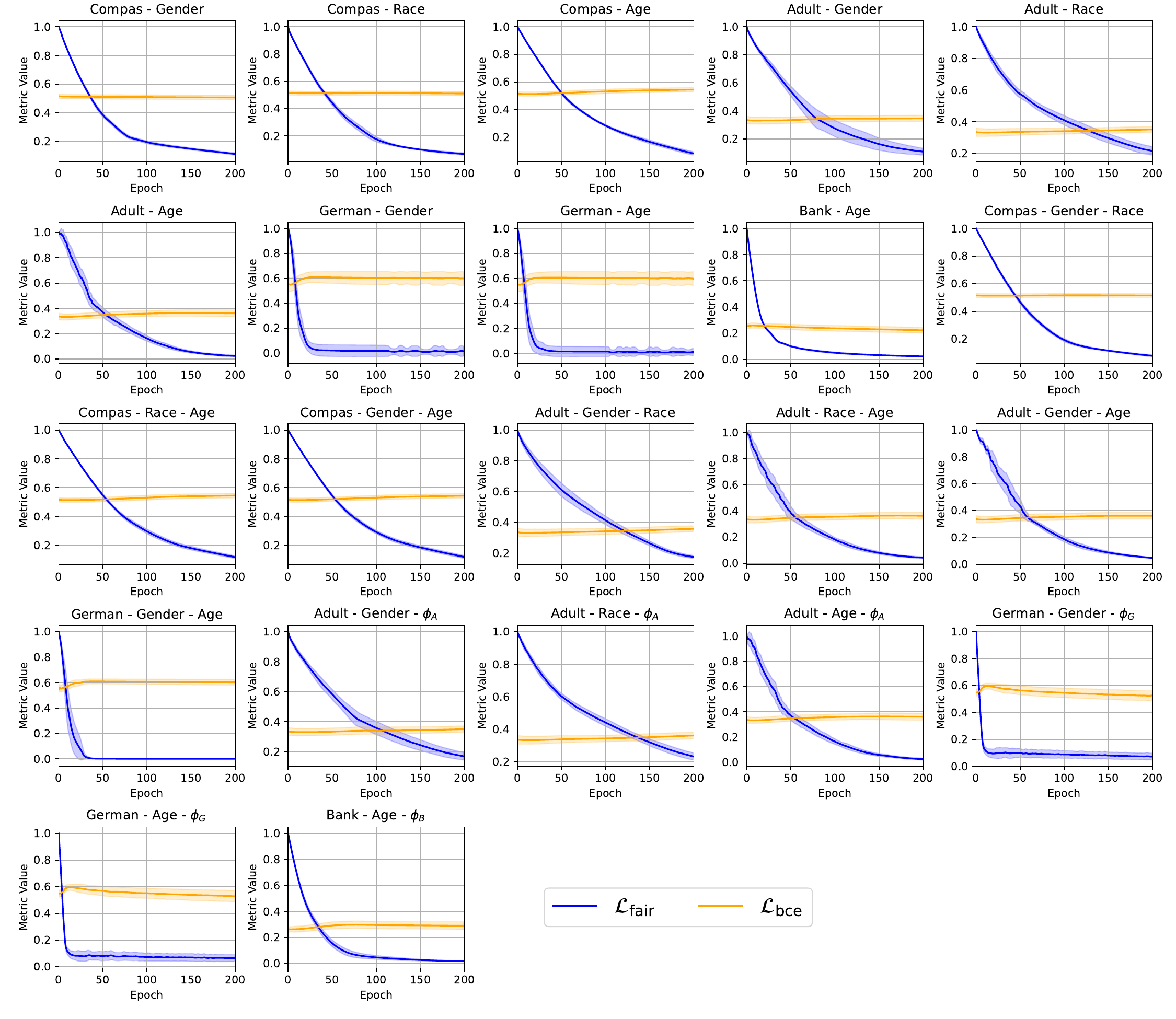}
\caption{Full results of loss dynamics during the progressive bounds tightening process in Step 1 of \name, with shaded areas representing the variance across multiple runs.}
\label{fig:exp-rq3-s1-full}
\end{figure*}




\end{document}